\newcommand{\Sig}{\Sigma}
\newcommand{\eps}{\varepsilon}
\newcommand{\rev}{R}
\newcommand{\deter}{D}
\newcommand{\cA}{\mathcal{A}}
\newcommand{\cB}{\mathcal{B}}
\newcommand{\cD}{\mathcal{D}}
\newcommand{\cN}{\mathcal{N}}
\newcommand{\emp}{\emptyset}
\newcommand{\qedb}{\hfill$\blacksquare$} 
\newcommand{\ol}{\overline}
\newcommand{\im}{\operatorname{im}}
\newcommand{\coim}{\operatorname{coim}}
\newcommand{\noin}{\noindent}
\newcommand{\atoms}{{\bf A}}
\newcommand{\atp}{{\phi(\atoms)}}
\newcommand{\be}{\begin{enumerate}}
\newcommand{\ee}{\end{enumerate}}
\newcommand{\bi}{\begin{itemize}}
\newcommand{\ei}{\end{itemize}}
\newtheorem{theorem}{Theorem}
\newtheorem{lemma}{Lemma}
\newtheorem{proposition}{Proposition}
\theoremstyle{remark}
\newtheorem{example}{Example}
\newtheorem{remark}{Remark}
\title{Maximally Atomic Languages\thanks{This work was supported by the Natural Sciences and Engineering Research Council of Canada under grant No.~OGP0000871.}}
\author{Janusz Brzozowski
\institute{David R. Cheriton School of Computer Science, University of Waterloo\\
Waterloo, ON, Canada N2L 3G1}
\email{brzozo@uwaterloo.ca}\smallskip\\
Gareth Davies
\institute{Department of Pure Mathematics, University of Waterloo\\
Waterloo, ON, Canada N2L 3G1}
\email{gdavies@uwaterloo.ca}
}
\begin{document}
\maketitle

\begin{abstract}
The atoms of a regular language are non-empty intersections of complemented and uncomplemented quotients of the language.
Tight upper bounds on the number of atoms of a language and on the quotient complexities of atoms are known.
We introduce a new class of regular languages, called the \emph{maximally atomic languages}, consisting of all languages meeting these bounds.
We prove the following result:
If $L$ is a regular language of quotient complexity $n$ and $G$ is the subgroup of permutations in the transition semigroup $T$ of the minimal DFA of $L$, then 
$L$ is maximally atomic if and only if $G$ is transitive on $k$-subsets of $\{1,\dotsc,n\}$ for $0 \le k \le n$ and $T$ contains a transformation of rank $n-1$.
\medskip

\noin
{\bf Keywords:}
atom, \'atomaton, finite automaton, quotient complexity, regular language, set-transitive group, state complexity, transition semigroup

\end{abstract}

\section{Introduction}
The \emph{state/quotient complexity} of a regular language is the number of states in the minimal deterministic finite automaton (DFA) of the language, or equivalently, the number of left quotients of the language.
An \emph{atom} of a regular language is a non-empty intersection of the language's left quotients, some of which may be complemented.
Brzozowski and Tamm have found tight upper bounds on the number of atoms of a language~\cite{BrTa14} and on the quotient complexities of atoms~\cite{BrTa13}.
This lets us define a new class of regular languages  which we call \emph{maximally atomic}: these are  regular languages whose atoms meet these bounds.

The \emph{transition semigroup} of a DFA is the semigroup of transformations induced by the transition function of the DFA  on its set of states. Our main result (stated formally in Section \ref{sec:main}) is the following relationship between maximally atomic languages and transition semigroups:
\smallskip

\em
A regular language with quotient complexity $n$ is maximally atomic if and only if the transition semigroup of its minimal DFA contains permutations that can map any subset of $\{1,\dotsc,n\}$ to any other subset of the same size, as well as at least one transformation with an image of size $n-1$.
\smallskip

\rm
In the process of proving this, we establish several other relationships between transition semigroups and atoms; in particular, we give
sufficient conditions for a language to have the maximal number of atoms, and necessary and sufficient conditions for certain individual atoms to have maximal complexity. We also derive a general formula for the transition functions of ``\'atomata'' (nondeterministic finite automata whose states correspond to the atoms of the language they recognize).

\section{Definitions and Terminology}

\subsection{Partially Ordered Sets}
A \emph{partially ordered set} (poset) is a pair $(S,\le)$ where $S$ is a set and $\le$ is a partial order on $S$. 
A \emph{subposet} of $(S,\le)$ is a poset $(T,\le)$ such that $T \subseteq S$. 
We often abbreviate $(S,\le)$ to simply $S$. 

If $T$ is a subposet of $S$, then for $a,b \in S$, the \emph{interval} of $T$ between $a$ and $b$, denoted $[a,b]_T$, is the set of all $t \in T$ such that $a \le t$ and $t \le b$. Note that if $b < a$, then the interval $[a,b]_T$ is empty.

Let $Q_n = \{1,2,\dotsc,n\}$, let $P = (2^{Q_n},\subseteq)$, and let $X$ be a subposet of $P$.
For each non-empty interval $[V,U]_X$, define the \emph{type} of $[V,U]_X$ to be the pair of integers $\left(\left|\bigcap [V,U]_X \right|,\left|\bigcup [V,U]_X\right|\right)$.
Let the type of the empty interval be $(-1,-1)$.

\subsection{Transformations}
A \emph{transformation} of a set $X$ is a mapping $t \colon X \rightarrow X$.  Since we deal only with finite sets, we assume without loss of generality that  $X = Q_n$ for some $n$.
A \emph{permutation} is an invertible (one-to-one and onto) transformation. 
A \emph{singular transformation}
is a non-invertible transformation.

If a transformation $t$ maps $i$ to $j$, we say the \emph{image} of $i$ under $t$ is $j$ and write $t(i) = j$. 
The image of $S \subseteq Q_n$ is $t(S) = \{ t(i) \mid i \in S \}$. 
The image of $t$ itself is $\im t = t(Q_n)$.
The \emph{coimage} of $t$ is $\coim t = \ol{\im t}$, where $\ol{S} = Q_n \setminus S$. 
The \emph{preimage} of an element $i$ under $t$ is $t^{-1}(i) = \{ j \mid t(j) = i \}$. 
The preimage of $S \subseteq Q_n$ under $t$ is $t^{-1}(S) = \bigcup_{i \in S} t^{-1}(i)$. 
The \emph{rank} of a transformation is $|\im t|$. 
The \emph{composition} or \emph{product} of two transformations $s$ and $t$ is $s \circ t$, defined by $(s \circ t)(i) = s(t(i))$. 

A \emph{transposition} $(i,j)$ for $i\neq j$  is a transformation such that $t(i) = j$, $t(j) = i$, and $t(\ell) = \ell$ for all  $\ell \not\in \{i,j\}$. 
A permutation is  \emph{even} if it can be written as a product of an even number of transpositions and  it is  \emph{odd} otherwise.
A~\emph{unitary transformation}, denoted  by $(i \rightarrow j)$ (with $i \ne j$), is a transformation such that $t(i) = j$ and $t(\ell) = \ell$ for all $\ell \ne i$.

\subsection{Semigroups, Monoids, and  Groups }
A \emph{semigroup} is a pair $(S,\cdot)$, where $S$ is a non-empty set and $\cdot$ is an associative binary operation. 
We often abbreviate $(S,\cdot)$ to $S$.
A \emph{monoid} $M = (M,\cdot,e)$ is a semigroup with identity $e$, and a \emph{group} $G = (G,\cdot,e)$ is a monoid in which each element has an inverse.
A \emph{subsemigroup} of $(S,\cdot)$ is a semigroup $(T,\cdot)$ where $T \subseteq S$.
If $(S,\cdot,e)$ and $(M,\cdot,e)$ are monoids with $M \subseteq S$, then
$M$ is a \emph{submonoid} of $S$.
A \emph{subgroup} of $S$ is a submonoid $G$ of $S$ such that $G$ is a group.

The \emph{full transformation semigroup} of degree $n$, denoted $T_n$, is the set of all transformations $t \colon Q_n \rightarrow Q_n$ under the binary operation $\circ$. 
Note that $T_n$ is a monoid, since the identity transformation of $Q_n$ acts as the identity element.
The \emph{symmetric group} of degree $n$, denoted  by  $S_n$, is the subgroup of permutations in $T_n$.
A \emph{transformation semigroup} of degree $n$ is a subsemigroup of $T_n$, and a \emph{permutation group} of degree $n$ is a subgroup of $S_n$.
A \emph{conjugate} of a permutation group $G$ of degree $n$ is a group of the form $\{ p \circ g \circ p^{-1} \mid g \in G\}$, where $p \in S_n$.

Let $G$ be a permutation group of degree $n$ and  let $X$ be   a set.
For $x \in X$, the \emph{orbit} of $x$ under $G$ is the set $\{ g(x) \mid g \in G \}$. 
We say that $G$ \emph{acts transitively} or \emph{is transitive} on a set $X$ if for all $x,y \in X$ there exists $g \in G$ such that $g(x) = y$, or equivalently, if $G$ has only one orbit when it acts on $X$.
We say $G$ is \emph{$k$-set-transitive} if it is transitive on the set of $k$-subsets (subsets of cardinality $k$) of $Q_n$.
If $G$ is $k$-set-transitive for $0 \le k \le n$, we say $G$ is \emph{set-transitive}. 

The set-transitive permutation groups have been fully classified by Beaumont and Peterson~\cite{BePe55}.
In general, a set-transitive group is either the symmetric group $S_n$ or the alternating group $A_n$ (the subgroup of even permutations in $S_n$).
When $n$ is small there are four exceptions (up to conjugation):
\begin{proposition}
\label{prop:settrans}
A set-transitive permutation group of degree $n$ is $S_n$ or $A_n$ or a conjugate of one of the following permutation groups:
\be
\item
For $n=5$, the affine general linear group $\operatorname{AGL}(1,5)$.
\item
For $n=6$, the projective general linear group $\operatorname{PGL}(2,5)$.
\item
For $n=9$, the projective special linear group $\operatorname{PSL}(2,8)$.
\item
For $n=9$, the projective semilinear group $\operatorname{P\Gamma L}(2,8)$.
\ee
\end{proposition}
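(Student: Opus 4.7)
The plan is to follow the original strategy of Beaumont and Peterson, combining a combinatorial duality with case analysis for small $n$ and structural group-theoretic results for large $n$. The first observation is a complement duality: $G$ is $k$-set-transitive if and only if $G$ is $(n-k)$-set-transitive, since $g(A) = B$ is equivalent to $g(\ol{A}) = \ol{B}$, where $\ol{S} = Q_n \setminus S$. Consequently, it suffices to analyze $k$-set-transitivity for $k \le \lfloor n/2 \rfloor$.

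The central lemma, which I would establish along the lines of Livingstone and Wagner, states that if $G$ is $k$-set-transitive and $2k \le n$, then $G$ is already $(k-1)$-set-transitive. The proof is a double count of incidences between $(k-1)$-subsets and $k$-subsets, combined with the orbit structure of $G$ on each. Iterating downward, a set-transitive group is in particular transitive on points and on unordered pairs; a short combinatorial argument then upgrades transitivity on $2$-subsets together with pointwise transitivity to genuine $2$-transitivity on ordered pairs, except in explicit degenerate situations that can be excluded.

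Once $2$-transitivity is in hand, I would appeal to the known classification of $2$-transitive permutation groups, which for each degree lists finitely many candidates besides $A_n$ and $S_n$. For each such candidate I would check whether $k$-set-transitivity holds for every $k$ with $2 \le k \le \lfloor n/2 \rfloor$. A crude degree count based on orbit sizes shows that for $n \ge 10$ only $A_n$ and $S_n$ survive. For $n \in \{5,6,9\}$ one recovers exactly the four exceptional groups $\operatorname{AGL}(1,5)$, $\operatorname{PGL}(2,5)$, $\operatorname{PSL}(2,8)$, $\operatorname{P\Gamma L}(2,8)$, and for the remaining small degrees direct inspection yields no further examples.

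The main obstacle is the explicit verification of the exceptional cases in degrees $5$, $6$, and $9$: one must realize each of the four sporadic groups concretely inside the corresponding symmetric group, confirm set-transitivity at every intermediate $k$, and argue that no additional set-transitive group sits between them and $A_n$. Avoiding the modern (and highly non-trivial) classification of $2$-transitive groups -- as Beaumont and Peterson had to in 1955 -- would force me to carry out this verification by direct orbit-counting arguments at each relevant degree, which is where the real combinatorial work of the proof lies.
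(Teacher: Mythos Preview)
The paper does not prove this proposition at all: it is quoted as a known classification due to Beaumont and Peterson~\cite{BePe55}, with no argument beyond the citation. There is therefore no ``paper's own proof'' to compare your attempt against; within this paper the result is used as a black box.

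Your outline is a plausible reconstruction of how one might prove such a classification, but two points deserve caution. First, the step ``upgrade $2$-set-transitivity to genuine $2$-transitivity on ordered pairs, except in explicit degenerate situations'' is not as short as you suggest: there exist $2$-homogeneous groups that are not $2$-transitive (Kantor classified them; they are certain affine groups of odd order), and these cannot simply be waved away. Second, invoking the classification of $2$-transitive permutation groups brings in the Classification of Finite Simple Groups, which is enormously heavier machinery than anything Beaumont and Peterson had in 1955 or than this paper needs. You acknowledge this tension yourself in the final paragraph; the honest route matching the cited source is the direct orbit-counting and degree-by-degree case analysis, not the $2$-transitive shortcut.
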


\subsection{Finite Automata}
A \emph{nondeterministic finite automaton} (\emph{NFA}) is a tuple $\cN = (Q,\Sig,\eta,I,F)$, where $Q$ is a  finite, non-empty  set of \emph{states}, $\Sig$ is a finite, non-empty   \emph{alphabet}, $\eta\colon Q \times \Sig \rightarrow 2^Q$ is a \emph{transition function}, $I \subseteq Q$ is a set of \emph{initial} states, and $F \subseteq Q$ is a set of \emph{final} states.
We extend $\eta$ to $\eta\colon 2^Q \times \Sig^* \rightarrow 2^Q$ as follows: for $S \subseteq Q$ and $w = xa$, $x \in \Sig^*$, $a \in \Sig$, we define $\eta(S,w)$ inductively by $\eta(S,\eps) = S$ and $\eta(S,xa) = \eta(\eta(S,x),a) = \bigcup_{s \in \eta(S,x)} \eta(s,a)$. 
We define $\eta_w \colon 2^Q \rightarrow 2^Q$ by $\eta_w(S) = \eta(S,w)$. 

A word $w$ is \emph{accepted} by $\cN$ if $\eta_w(I) \cap F \ne \emp$. 
The \emph{language accepted by $\cN$} is the set of all words accepted by $\cN$. 
The \emph{language of a state $q \in Q$} is the language accepted by the modified NFA $\cN_q = (Q,\Sig,\eta,\{q\},F)$. 
For $S,T \subseteq Q$, we say $S$ is \emph{reachable from $T$} in $\cN$ if there exists $w \in \Sig^*$ such that $\eta_w(T) = S$. 
If $S$ is reachable from $I$, we  simply say $S$ is   \emph{reachable} in $\cN$.
An NFA that accepts a language $L$ is \emph{minimal} if the number of states is minimal among all NFAs that accept $L$.

A \emph{deterministic finite automaton} (\emph{DFA}) is a tuple $\cD = (Q,\Sig,\delta,q_1,F)$, where $Q$, $\Sig$ and $F$ have the same meaning as in an NFA, $\delta \colon Q \times \Sig \rightarrow Q$ is a transition function, and $q_1 \in Q$ is an initial state.
Since DFAs are special cases of NFAs, all the definitions above apply  also  to DFAs. 
While minimal NFAs need not be unique, there is a unique (up to isomorphism) minimal DFA for each regular language. 

For all $w \in \Sig^*$, $\delta_w \colon Q \rightarrow Q$ is a transformation of the set of states of $\cD$; we call this the \emph{transformation induced by $w$} in $\cD$.
The \emph{transition semigroup} of $\cD$ is the semigroup $(T,\circ)$, where $T = \{\delta_w \mid w \in \Sig^+\}$.
This is the semigroup of transformations of $Q$ induced by non-empty words over $\Sig$ in $\cD$.

For an NFA $\cN = (Q,\Sig,\eta,I,F)$, define the \emph{reverse} of $\cN$ to be the NFA $\cN^\rev = (Q,\Sig,\eta^\rev,F,I)$, where $\eta^\rev(q,a) = \{ p \in Q \mid q \in \eta(p,a) \}$.
Note that if $\cN = \cD$ is a DFA with transition function $\delta$, then $\delta_w$ is a transformation and we have $\delta_w^\rev = \delta_w^{-1}$.
Define the \emph{determinization} of an NFA $\cN$ to be the DFA $\cN^\deter = (Q',\Sig,\eta^\deter,I,F')$, where $Q' = \{S \in 2^Q \mid \text{$S$ is reachable in $\cN$} \}$, $F' = \{ S \in Q' \mid S \cap F \ne \emp\}$, and $\eta^\deter(S,a) = \bigcup_{s \in S} \eta(s,a)$. 

\subsection{Languages, Quotients, and Atoms}
Let $L$ be a regular language over the alphabet $\Sig$ and let $\cD = (Q_n,\Sig,\delta,q_1,F)$ be the minimal DFA of $L$. 
The \emph{left quotient} (or simply \emph{quotient}) of $L$ by the word $w \in \Sig^*$ is $w^{-1}L = \{ x \mid wx \in L\}$.
There is a one-to-one correspondence between quotients of $L$ and states of the minimal DFA of $L$: the languages of distinct states of $\cD$ are distinct quotients of $L$.
We use the following convention when discussing quotients of $L$: the set of quotients is $\{K_1,K_2,\dotsc,K_n\}$, where $K_i$ is the language of state $i$ of $\cD$. 
Due to the one-to-one correspondence between states and quotients, the \emph{complexity} of $L$ can be equivalently defined as the number of states in the minimal DFA of $L$ (\emph{state complexity}) or the number of distinct quotients of $L$ (\emph{quotient complexity}).

From now on we deal with non-empty languages only.
Denote the complement of a language $L$ by $\ol{L} = \Sig^* \setminus L$.
For $S \subseteq Q_n$, let $A_S$ denote the intersection $\bigcap_{i \in S} K_i \cap \bigcap_{i \in \ol{S}} \ol{K_i}$.
If $A_S$ is non-empty, then $A_S$ is called an \emph{atom} of $L$. 
Let $\atoms$ be the set of all atoms of $L$. 
The \emph{atom map} $\phi\colon \atoms \rightarrow 2^{Q_n}$ is defined by $\phi(A_S) = S$. 
This map is well-defined, since for each atom $A$ there is precisely one subset $S$ of $Q_n$ such that $A_S = A$.
The \emph{basis} of an atom $A$ is $\cB(A) = \{ K_i \mid i \in \phi(A) \}$.

The \emph{\'atomaton} of $L$ is the NFA $\cA = (\atoms,\Sig,\eta,I,F)$, where 
$\eta(A_i,a) = \{ A_j \mid aA_j \subseteq A_i \}$,
$I = \{ A \in \atoms \mid q_1 \in \phi(A) \}$, 
and $F = \{ A \in \atoms \mid \eps \in A \}$. 
Note that the initial atoms are those that contain $L$ in their bases. 
Also, there is precisely one final atom: the atom  for which all the quotients in its basis contain $\eps$ and all other quotients do not.
The language of state $A$ of $\cA$ is the atom $A$~\cite{BrTa14}. 

The \emph{atomic poset} of $L$ is $\atp = (\phi(\atoms),\subseteq)$; this is the set of all subsets $S$ of $Q_n$ such that $A_S$ is an atom. An \emph{atomic interval} of $L$ is an interval in $\atp$, that is, an interval of the form $[V,U]_\atp$. We denote an atomic interval using double brackets, since this makes the notation cleaner: we write $[[V,U]]$ instead of $[V,U]_\atp$. Since $\atp$ is a subposet of $(2^{Q_n},\subseteq)$, any two subsets of $Q_n$ can act as endpoints of an atomic interval. Furthermore, every atomic interval $[[V,U]]$ has an associated type $(v,u)$, as defined in the section on posets.

Note that, if $[[V,U]]$ contains both of its endpoints (i.e., $V,U \in [[V,U]]$), then the type of $[[V,U]]$ is $(|V|,|U|)$.
However, we cannot always use the sizes of the endpoints to determine the type of an interval, since there may be multiple ways to choose the endpoints of an interval. 
For example, if $A_{\{1\}}$ is an atom but $A_\emp$ and $A_{\{1,2\}}$ are not, then $[[\{1\},\{1\}]] = [[\emp,\{1\}]] = [[\{1\},\{1,2\}]] = \{\{1\}\}$. 
But this interval has type $(1,1)$, not $(0,1)$ or $(1,2)$.

Some basic facts about atoms and \'atomata follow. The following proposition, proved in \cite{BrTa13}, shows that we may view the states of $\cA$ as subsets of $Q_n$:

\begin{proposition}
\label{prop:atomiso}
Let $L$ be a regular language with \'atomaton $\cA$ and minimal DFA $\cD$. Then the atom map $\phi$ is an NFA isomorphism between $\cA$ and $\cD^{\rev\deter\rev}$.
\end{proposition}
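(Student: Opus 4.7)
The plan is to unwrap the $\rev\deter\rev$ construction and match each piece of $\cD^{\rev\deter\rev}$ to the corresponding piece of $\cA$ via $\phi$. The linchpin is a characterization of atoms, which I would establish first: for any $S \subseteq Q_n$, a word $w$ lies in $A_S$ iff $\delta_w^{-1}(F) = S$. This falls straight out of the definitions, since $w \in K_i$ iff $\delta(i,w) \in F$, so membership in $\bigcap_{i \in S} K_i \cap \bigcap_{i \notin S} \ol{K_i}$ is exactly the statement that $\{i \mid \delta(i,w) \in F\} = S$.

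From this characterization, $A_S$ is non-empty iff $S = \delta_w^{-1}(F)$ for some $w \in \Sig^*$. These are precisely the subsets reachable from $F$ in $\cD^\rev$, i.e., the states of $\cD^{\rev\deter}$, which are also the states of $\cD^{\rev\deter\rev}$; so $\phi$ is a bijection on state sets. For initial states: those of $\cD^{\rev\deter\rev}$ are the final states of $\cD^{\rev\deter}$, namely the subsets containing $q_1$, which correspond under $\phi$ to the initial atoms of $\cA$. For the final state: $\cD^{\rev\deter\rev}$ has unique final state equal to the initial state $F$ of $\cD^{\rev\deter}$, while the unique final atom of $\cA$ is the one containing $\eps$, and the characterization gives $\eps \in A_S$ iff $\delta_\eps^{-1}(F) = F = S$; so the final atom also maps to $F$ under $\phi$.

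It remains to match transitions. Using the characterization, I would argue that $a A_{S_j} \subseteq A_{S_i}$ iff $\delta_a^{-1}(S_j) = S_i$: picking any $w \in A_{S_j}$ (possible because $A_{S_j}$ is non-empty), the condition $aw \in A_{S_i}$ becomes $\delta_a^{-1}(\delta_w^{-1}(F)) = S_i$, i.e., $\delta_a^{-1}(S_j) = S_i$, a condition independent of the choice of $w$. On the other side, in $\cD^{\rev\deter\rev}$ a transition from $S_i$ on $a$ reaches $S_j$ iff the $\cD^{\rev\deter}$ transition from $S_j$ on $a$ equals $S_i$, and that determinization transition is $\bigcup_{s \in S_j} \eta^\rev(s, a) = \delta_a^{-1}(S_j)$. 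Both sides reduce to the same equation $\delta_a^{-1}(S_j) = S_i$, giving the required NFA isomorphism. The only real obstacle is keeping the reverse-determinize-reverse bookkeeping straight; once the atom characterization is in place, every other piece drops out mechanically.
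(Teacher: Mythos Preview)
Your proof is correct. The characterization $w \in A_S \iff \delta_w^{-1}(F) = S$ is indeed the key observation, and once it is in hand the matching of states, initial/final sets, and transitions goes through exactly as you describe. One small point worth making explicit for the transition step: the equivalence $aA_{S_j} \subseteq A_{S_i} \iff \delta_a^{-1}(S_j) = S_i$ needs both directions, but you have them, since the computation $\delta_{aw}^{-1}(F) = \delta_a^{-1}(\delta_w^{-1}(F)) = \delta_a^{-1}(S_j)$ is independent of the particular witness $w \in A_{S_j}$, so one witness suffices for the forward direction and the same computation gives the converse for all $w$.

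As for comparison with the paper: the paper does not actually prove this proposition. It is stated with the preface ``proved in~\cite{BrTa13}'' and used as a black box; the paper then derives Proposition~\ref{prop:atomrev} from it. Your argument is essentially the standard proof one would expect to find in the cited source, and in fact your characterization $w \in A_S \iff \delta_w^{-1}(F) = S$ simultaneously yields Proposition~\ref{prop:atomrev} as a byproduct (non-emptiness of $A_S$ is exactly reachability of $S$ in $\cD^\rev$), which the paper treats as a separate corollary.
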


The next proposition relates the number of atoms of $L$ to the complexity of the reverse $L^\rev$. The proof follows easily from Proposition \ref{prop:atomiso}.

\begin{proposition}[Number of Atoms]
\label{prop:atomrev}
Let $L$ be a regular language with complexity $n$, and let  the minimal DFA of $L$ be $\cD=(Q_n,\Sig,\delta,q_1,F)$. 
Then for $S \subseteq Q_n$, the intersection $A_S$ is an atom of $L$ if and only if $S$ is reachable in $\cD^\rev$, i.e, if and only if there exists $w \in \Sig^*$ such that $\delta^{-1}_w(F) = S$. 
Thus there is a bijection between atoms of $L$ and states of $\cD^{\rev\deter}$, the minimal DFA of $L^\rev$.
\end{proposition}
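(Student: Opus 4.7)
The plan is to derive the whole proposition from Proposition~\ref{prop:atomiso}, which already supplies an NFA isomorphism $\phi \colon \cA \to \cD^{\rev\deter\rev}$ sending each atom $A$ to the subset $\phi(A) \subseteq Q_n$. The key structural observation is that the reversal operation on an NFA changes only the initial and final state sets and the transition function, never the underlying state set. Consequently $\cD^{\rev\deter}$ and $\cD^{\rev\deter\rev}$ have the same set of states. Composing with the isomorphism $\phi$, this produces a bijection between atoms of $L$ and states of $\cD^{\rev\deter}$. Since $\cD$ is minimal (hence accessible), Brzozowski's double-reversal construction guarantees that $\cD^{\rev\deter}$ is the minimal DFA of $L^\rev$, which delivers the final sentence of the proposition.

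For the reachability characterization I unpack the subset construction that defines $\cD^{\rev\deter}$. Its state set consists of all subsets $S \subseteq Q_n$ that are reachable in $\cN = \cD^\rev$ starting from the initial set of $\cN$, which is $F$ (because reversal swaps initial and final states). Thus $S$ is a state of $\cD^{\rev\deter}$ precisely when $S = \eta^\rev_w(F)$ for some $w \in \Sig^*$. Because $\cD$ is deterministic, the identity recorded earlier, $\delta_w^\rev = \delta_w^{-1}$, holds; extending from points to subsets by taking unions of preimages yields $\eta^\rev_w(F) = \delta_w^{-1}(F)$. Combining this with the bijection from the first paragraph, $A_S$ is an atom of $L$ if and only if there exists $w \in \Sig^*$ with $S = \delta_w^{-1}(F)$.

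The only care required is bookkeeping along the chain $\cD \to \cD^\rev \to \cD^{\rev\deter} \to \cD^{\rev\deter\rev}$, which alternates between deterministic and nondeterministic automata; in particular, the equality $\delta_w^\rev = \delta_w^{-1}$ is available only because $\cD$ itself is a DFA, and the claim that $\cD^{\rev\deter}$ is \emph{minimal} uses that $\cD$ is minimal. With those conventions pinned down, the argument reduces to composing the isomorphism of Proposition~\ref{prop:atomiso} with the definition of the subset construction, so there is no real obstacle and the authors' ``follows easily'' remark is accurate.
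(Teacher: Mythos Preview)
Your proposal is correct and follows exactly the approach the paper indicates: you derive the result from Proposition~\ref{prop:atomiso} by unwinding the state sets along the chain $\cD \to \cD^\rev \to \cD^{\rev\deter} \to \cD^{\rev\deter\rev}$, observing that reversal preserves the state set, and identifying the reachable subsets in $\cD^\rev$ as preimages $\delta_w^{-1}(F)$. The paper gives no detail beyond ``follows easily from Proposition~\ref{prop:atomiso},'' so your write-up is a faithful expansion of the intended argument.
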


It is well-known that if the complexity of $L$ is $n$, then the complexity of $L^{\rev}$ is at most $2^n$, and for $n \ge 2$ this bound is tight.
Thus $2^n$ is also a tight bound on the number of atoms of a regular language when $n \ge 2$.

In~\cite{BrTa13}, a tight upper bound on the complexity of individual atoms was derived and a formula for the bound was given.
We give a different (but equivalent) formula below:

\begin{proposition}[Complexity of Atoms]
\label{prop:atombounds}
Let $L$ be a regular language with complexity $n$.
Define the function $\Psi$ as follows:
\[
\Psi(n,k) 
= \begin{cases}
2^n - 1, & \text{if $k = 0$ or $k = n$;} \\
1 + \sum_{v = 1}^k \sum_{u = k}^{n-1} \binom{n}{u}\binom{u}{v}, & \text{if $1 \le k \le n-1$.} \\
\end{cases}
\]
If $A_S$ is an atom of $L$, $\Psi(n,|S|)$ is a tight upper bound on the complexity of $A_S$.
\end{proposition}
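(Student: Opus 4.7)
The plan is to characterize each quotient of $A_S$ as a union of atoms indexed by an atomic interval, then to count how many such intervals can arise.

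First, since quotienting distributes over intersection and complement and $w^{-1}K_i = K_{\delta_w(i)}$, one computes directly that $w^{-1}A_S = \bigcap_{i \in S} K_{\delta_w(i)} \cap \bigcap_{i \in \overline{S}} \overline{K_{\delta_w(i)}}$. If $\delta_w(S) \cap \delta_w(\overline{S}) \ne \emp$, then $w^{-1}A_S = \emp$; otherwise, setting $V = \delta_w(S)$ and $U = Q_n \setminus \delta_w(\overline{S})$, we have $V \subseteq U$ and $w^{-1}A_S = \bigcap_{i \in V} K_i \cap \bigcap_{i \in \overline{U}} \overline{K_i} = \bigcup_{W \in [[V,U]]} A_W$. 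Thus every non-empty quotient is the union of atoms spanned by a non-empty atomic interval $[[V,U]]$ in which $|V| \le k \le |U|$, where $k = |S|$.

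Next I would bound the number of distinct quotients by counting the achievable pairs $(V,U)$. For $1 \le k \le n-1$, since both $S$ and $\overline{S}$ are non-empty, $|V| = |\delta_w(S)| \ge 1$ and $|U| = n - |\delta_w(\overline{S})| \le n-1$. The number of pairs with $V \subseteq U \subseteq Q_n$, $|V|=v$ for $1 \le v \le k$, and $|U|=u$ for $k \le u \le n-1$, is $\binom{n}{u}\binom{u}{v}$; summing and adding one for the empty quotient gives $\Psi(n,k)$. For $k=0$ or $k=n$, one of $V$ or $U$ is forced (to $\emp$ or $Q_n$ respectively), while the other ranges over the proper subsets of $Q_n$, giving exactly $2^n - 1$ non-empty quotients (no empty quotient can arise because $\delta_w(\emp) = \emp$ in those cases).

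For tightness I would invoke the universal witness language from~\cite{BrTa13}, whose transition semigroup realizes every admissible pair $(V, U)$ as $(\delta_w(S),\, Q_n \setminus \delta_w(\overline{S}))$ for some $w$, and in which distinct admissible pairs yield distinct unions of atoms (because every subset of $Q_n$ is the image-profile of an atom in the witness). The main obstacle is precisely this injectivity claim: one must verify that each of the $\Psi(n,k)$ candidate quotient languages is genuinely distinct from every other in the witness, which amounts to checking that the atomic partition is sufficiently fine and that the admissible range on $(v,u)$ has been tuned to neither over- nor under-count. Once this injectivity is established, the bound is attained, and the tightness claim follows.
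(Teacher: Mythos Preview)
The paper does not actually prove this proposition: it is quoted from~\cite{BrTa13}, with only the remark that the formula given here is an equivalent reformulation of the one derived there. So there is no proof in the paper to compare your attempt against.

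That said, your sketch is sound and, in fact, it anticipates the interval machinery that the paper builds in Section~\ref{sec:atomaton}. Your identity $w^{-1}A_S = \bigcup_{W \in [[V,U]]} A_W$ with $V=\delta_w(S)$ and $U=Q_n\setminus\delta_w(\overline{S})$ is exactly the paper's Lemma~\ref{lem:atrans} recast in quotient language (the paper writes $U=\Delta_w(S)$ and phrases the result via the \'atomaton's transition function $\eta_w$ rather than via quotients, but the content is the same). Your range restrictions $1\le v\le k$, $k\le u\le n-1$ are precisely the $S$-types of Lemma~\ref{lem:sintervals}, and your count $\binom{n}{u}\binom{u}{v}$ of type-$(v,u)$ intervals is the same counting argument invoked in that lemma's proof. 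The main difference is cosmetic: you work directly with quotients of $A_S$, while the paper works with reachable states of the determinized \'atomaton $\cA_S^\deter$; these are two views of the same object.

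Your handling of tightness is honest: you correctly flag that the real work lies in showing that the witness from~\cite{BrTa13} realizes all admissible $(V,U)$ and that distinct intervals yield distinct unions of atoms. Since the witness has all $2^n$ atoms, distinct non-empty intervals automatically give distinct unions, so the injectivity you worry about reduces to showing that every admissible $(V,U)$ is of the form $(\delta_w(S),\Delta_w(S))$ for some $w$; this is what~\cite{BrTa13} establishes and what the present paper re-derives (more generally) in the $\Leftarrow$ direction of Theorem~\ref{thm:atomcomp}.
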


With these bounds established, we can formally define the class of maximally atomic languages.
A~non-empty regular language $L$ of complexity $n$ is \emph{maximally atomic} if it has the maximal number of atoms (1 if $n=1$, $2^n$ if $n \ge 2$) and if for each atom $A_S$ of $L$, $A_S$ has the maximal complexity $\Psi(n,|S|)$.

\section{Main Results}
\label{sec:main}

Note that when $n = 1$, the only nonempty language over $\Sig$ is $\Sig^*$, and it is maximally atomic.
The following proposition characterizes the maximally atomic languages of complexity $n = 2$:
\begin{proposition}
Let $L$ be a regular language of complexity $2$ and let $\cD$ be its minimal DFA with state set $Q_2$.
Let $T$ be the transition semigroup of $\cD$. 
Then:
\bi
\item
There are four transformations of $Q_2$: the identity transformation, the transposition $(1,2)$, and the unitary transformations $(1 \rightarrow 2)$ and $(2 \rightarrow 1)$.
\item
$T$ contains all four transformations of $Q_2$ if and only if $T$ contains $(1,2)$ and at least one unitary transformation.
\item
All subsets of $Q_2$ are reachable in $\cD^\rev$ (and hence $L$ has all $2^2$ atoms) if and only if $T$ contains all four transformations of $Q_2$.
\item
Each atom of $L$ has maximal complexity if and only if $T$ contains all four transformations of $Q_2$.
\item
Thus, $L$ is maximally atomic if and only if $T$ contains all four transformations of $Q_2$.
\ei
\end{proposition}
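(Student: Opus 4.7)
The strategy is to establish the five bullets in order, as each leverages those before it. Bullet (i) is an immediate enumeration: a map $Q_2 \to Q_2$ is determined by the pair $(t(1), t(2))$, yielding exactly the four listed transformations. For bullet (ii), the forward direction is trivial; for the reverse, one computes $(1,2) \circ (1,2) = \mathrm{id}$ and $(1,2) \circ (1 \to 2) = (2 \to 1)$ (symmetrically starting from $(2 \to 1)$), so closure of $T$ under composition forces all four transformations into $T$ as soon as $T$ contains $(1,2)$ together with any unitary.

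For bullet (iii), I appeal to Proposition~\ref{prop:atomrev}: atoms of $L$ are in bijection with the subsets of $Q_2$ reachable in $\cD^\rev$ from $F$, and the transitions of $\cD^\rev$ act on subsets via the preimage maps $\delta_w^{-1}$. Since $L$ has complexity $2$, we cannot have $F = \emp$ (whence $L = \emp$) or $F = Q_2$ (whence $L = \Sig^*$), so $F$ is a singleton; it then suffices to examine how the preimage of each of the four transformations acts on a singleton of $Q_2$: the identity fixes it, $(1,2)$ swaps the two singletons, and the two unitaries send one singleton to $Q_2$ and the other to $\emp$. Hence when $T$ contains all four transformations, every subset of $Q_2$ is reached. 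Conversely, by the contrapositive of bullet (ii), if $T$ lacks a transformation then either $(1,2) \notin T$, in which case the other singleton is unreachable from $F$, or $T$ contains no unitary and thus consists only of permutations, whose preimages preserve cardinality so that $\emp$ and $Q_2$ are unreachable from a singleton.

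Bullet (iv) is the main obstacle, because it requires controlling the complexity of each atom, not merely its existence. For the reverse direction, assuming $T$ contains all four transformations, I would fix a representative minimal DFA over $\Sig = \{a, b, c\}$ whose letters induce $(1,2)$, $(1 \to 2)$, $(2 \to 1)$ respectively, construct its \'atomaton via Proposition~\ref{prop:atomiso}, and determinize it starting from each of the four atomic initial states; a short direct calculation then shows each determinization yields exactly $3 = \Psi(2, k)$ reachable states, matching the bound of Proposition~\ref{prop:atombounds}. For the forward direction I argue the contrapositive: whenever $T$ omits some transformation, the case split from bullet (iii) identifies which atom is missing, and by examining the at most three remaining atoms in each case one verifies that at least one of them has strictly fewer than three distinct quotients. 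Bullet (v) is then immediate from (iii) and (iv), since $L$ is by definition maximally atomic iff it has $2^2$ atoms and each has maximal complexity.
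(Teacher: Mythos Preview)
The paper does not actually give a proof of this proposition; it only remarks that ``the computations required to prove this proposition can be easily done by hand.'' Your plan is a correct and reasonable way to carry out exactly those computations, and bullets (i)--(iii) and (v) are fine as written.

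One point in bullet (iv) deserves to be made explicit. For the reverse implication you propose to ``fix a representative minimal DFA over $\Sigma = \{a,b,c\}$'' and compute there. As stated, this only proves the claim for that one language, not for an arbitrary $L$ whose transition semigroup happens to equal $T_2$. What makes the reduction legitimate is that the quotient complexity of $A_S$ is the number of distinct sets $\eta_w(\{A_S\})$ as $w$ ranges over $\Sigma^*$, and (via Proposition~\ref{prop:atomiso}, or directly from the definition of $\eta$) each such set is determined by the transformation $\delta_w$ together with the atomic poset; hence it depends only on $T \cup \{\mathrm{id}\}$ and on which subsets of $Q_2$ are atoms. Since bullet (iii) already forces the atomic poset to be all of $2^{Q_2}$ whenever $T = T_2$, the computation on any one representative transfers to every such $L$. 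You should say this, or else simply perform the computation abstractly with the four transformations rather than with a concrete alphabet. The forward (contrapositive) direction of (iv) is, as you indicate, a small finite case analysis over the proper subsemigroups of $T_2$ compatible with reachability of state $2$; it goes through without difficulty.
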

The computations required to prove this proposition can be easily done by hand.
Henceforth we will be concerned only  with languages of complexity $n \ge 3$.
\smallskip

Our main theorem is the following:
\begin{theorem}
\label{thm:maxatom}
Let $L$ be a regular language over $\Sig$ with complexity $n \ge 3$, and let $T$ be the transition semigroup of the minimal DFA of $L$. Then $L$ is maximally atomic if and only if the subgroup of permutations in $T$ is set-transitive and $T$ contains a transformation of rank $n-1$.
\end{theorem}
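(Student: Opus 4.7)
The plan is to prove both directions of the biconditional using a single description of the quotients of each atom. By Proposition~\ref{prop:atomrev}, the atoms of $L$ correspond to the subsets of $Q_n$ of the form $\delta_w^{-1}(F)$, and a short computation based on $a^{-1}K_i = K_{\delta_a(i)}$ together with the fact that quotients commute with complementation shows that, for each atom $A_S$,
\[
w^{-1}A_S \;=\; \bigcup \{\, A_U : U \in \atp,\ \delta_w(S) \subseteq U \subseteq \ol{\delta_w(\ol{S})}\,\}.
\]
Thus each nonempty quotient of $A_S$ corresponds to the atomic interval $[[\delta_w(S),\ol{\delta_w(\ol{S})}]]$, and the quotient equals $\emp$ exactly when $\delta_w(S) \cap \delta_w(\ol{S}) \ne \emp$. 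Under the hypothesis that $L$ has all $2^n$ atoms, this interval is determined by the pair $(V,W) = (\delta_w(S),\ol{\delta_w(\ol{S})})$, and comparing with Proposition~\ref{prop:atombounds} shows that, for $1 \le |S| \le n-1$, $A_S$ attains maximal complexity iff every pair $(V,W)$ with $V \subseteq W$, $1 \le |V| \le |S|$, $|S| \le |W| \le n-1$ is realized by some $w$, together with at least one $w$ producing a collision $\delta_w(S) \cap \delta_w(\ol{S}) \ne \emp$; the boundary atoms $A_\emp$ and $A_{Q_n}$ admit an analogous description in terms of the images $\im \delta_w$.

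For the necessity direction, assume $L$ is maximally atomic. For each $k$ with $1 \le k \le n-1$, fix a $k$-subset $S$ and let $V$ range over all $k$-subsets of $Q_n$: the pair $(V,V)$ must be realized by some $\delta_w \in T$ with $\delta_w(S) = V$ and $\delta_w(\ol{S}) = \ol{V}$, which is necessarily a bijection and hence a permutation in $G$. Thus the $G$-orbit of $S$ contains every $k$-subset, so $G$ is $k$-set-transitive, and letting $k$ vary makes $G$ set-transitive. Picking next a pair $(V,W)$ with $|V| = k$, $|W| = k+1$, $V \subseteq W$, the forced $\delta_w$ has image $V \cup \ol{W}$ of size exactly $n-1$ and supplies the required rank-$(n-1)$ element of $T$.

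For sufficiency, assume $G$ is set-transitive and $t_0 \in T$ has rank $n-1$; write $\{a\} = Q_n \setminus \im t_0$ for its coimage and $\{b,c\} = t_0^{-1}(d)$ for its unique doubled fibre. To reach every subset of $Q_n$ from $F$ in $\cD^\rev$, I would sandwich $t_0$ between elements of $G$: a permutation $g \in G$ making $g^{-1}(F)$ contain $a$ but not $d$ (resp.\ $d$ but not $a$) gives $t_0^{-1}(g^{-1}(F))$ of size $|F|-1$ (resp.\ $|F|+1$); iterating and sweeping each size class by set-transitivity of $G$ reaches all $2^n$ subsets (the extremes $0$ and $n$ follow by placing $F$ inside or outside the image of a well-chosen word). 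To realize every required pair $(V,W)$ for a fixed atom $A_S$, I would induct on the merge-count $(|S|-|V|) + (|\ol{S}|-|\ol{W}|)$: the base case $|V|=|W|=|S|$ follows from the set-transitivity of $G$, and each inductive step prepends one additional $t_0$-factor, conjugated by permutations in $G$ so that its doubled fibre $\{b,c\}$ sits on the two elements about to be merged (inside the current image of $S$ or of $\ol{S}$, depending on whether $|V|$ or $|\ol{W}|$ must shrink) and its coimage $\{a\}$ lies outside that current image. A final conjugate of $t_0$ whose doubled fibre straddles $S$ and $\ol{S}$ produces the $\emp$ quotient.

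The main obstacle will be this inductive construction in the sufficiency direction: realizing every admissible pair $(V,W)$ by an explicit composition of elements of $G$ with copies of $t_0$ while simultaneously controlling both sides of the partition $(S,\ol{S})$ and avoiding accidental collisions between $S$-images and $\ol{S}$-images. The bookkeeping calls on set-transitivity of $G$ repeatedly between successive $t_0$-factors to reposition $\{a\}$ and $\{b,c\}$ so that each $t_0$ performs exactly the intended single merge while leaving every other image point undisturbed.
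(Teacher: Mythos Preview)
Your description of the quotients of $A_S$ as unions of atoms indexed by the atomic interval $[[\delta_w(S),\ol{\delta_w(\ol S)}]]$ is correct and is essentially Lemma~\ref{lem:atrans} rephrased; your necessity argument (realizing the pairs $(V,V)$ and $(V,W)$ with $|W|=|V|+1$ to extract $|S|$-set-transitivity and a rank-$(n-1)$ map) is the same mechanism the paper uses in the $\Rightarrow$ direction of Theorem~\ref{thm:atomcomp}. Your reachability argument for the $2^n$ atoms via $t_0^{-1}$ sandwiched between elements of $G$ is also sound.

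The gap is in the sufficiency direction for maximal atom complexity. Your induction asks, at each step, for a conjugate (or a $g_1 t_0 g_2$ sandwich) of the single rank-$(n-1)$ map $t_0$ whose doubled fibre sits inside the current image of $S$ (or of $\ol S$) while its coimage point simultaneously avoids the current image of the other part. Set-transitivity lets you place \emph{one} subset wherever you like, but it does not let you control two disjoint sets at once: a set-transitive group need not act transitively on ordered pairs of disjoint subsets, and already for $n=3$ with $G=A_3$ (which is set-transitive but not $2$-transitive) there are ordered pairs $(g(a),g(d))$ that no $g\in G$ realizes. Consequently the ``bookkeeping'' you describe as the main obstacle is not just delicate---it is the substance of the proof, and as written it does not go through from set-transitivity alone.

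The paper sidesteps exactly this difficulty by invoking a semigroup-theoretic fact: Lemma~\ref{lem:gens} (built on Ru\v{s}kuc's theorem, Proposition~\ref{prop:mca}) shows that a $2$-set-transitive group together with \emph{any} rank-$(n-1)$ transformation already generates \emph{all} singular transformations. Once every unitary $(i\to j)$ is available in $T$, shrinking the lower endpoint and enlarging the upper endpoint of $[[V,U]]$ are one-step moves that affect only the intended side, and the simultaneous-control problem disappears. If you want to keep your direct approach you would effectively be re-proving Lemma~\ref{lem:gens}; otherwise, insert that lemma before your inductive construction and the rest of your outline becomes straightforward.
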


In view of this, let us consider how the class of maximally atomic languages relates to other language classes.
Let \textbf{FTS} denote the class of languages whose minimal DFAs have the \emph{full transformation semigroup} as their transition semigroup, 
let 
\textbf{STS} denote the class whose minimal DFAs have transition semigroups with a \emph{set-transitive subgroup} of permutations and a transformation of rank $n-1$, 
let \textbf{MAL} denote the class of \emph{maximally atomic languages}, let \textbf{MNA} denote the class of languages with the \emph{maximal number of atoms}, and let \textbf{MCR} denote the class of languages with a \emph{maximally complex reverse}. 
\be
\item
\textbf{FTS} is properly contained in \textbf{STS}, by Proposition \ref{prop:settrans}.
\item
\textbf{STS} is equal to \textbf{MAL}, by Theorem \ref{thm:maxatom}.
\item
\textbf{MAL} is contained in \textbf{MNA}. 
Figure 1 in \cite{BrDa13} shows the containment is proper.

\item
\textbf{MNA} is equal to \textbf{MCR}, by Proposition \ref{prop:atomrev}.
\ee
To summarize, we have:
$ \text{\textbf{FTS} $\subset$ \textbf{STS} = \textbf{MAL} $\subset$ \textbf{MNA} = \textbf{MCR}.} $

The proof of Theorem \ref{thm:maxatom} relies on two intermediate results.
The first gives a condition that is sufficient (but not necessary) for $L$ to have $2^n$ atoms:
\begin{theorem}
\label{thm:atomnum}
Let $L$ be a regular language over $\Sig$ with complexity $n \ge 3$, and let $T$ be the transition semigroup of the minimal DFA of $L$. If $T$ contains all unitary transformations, then $L$ has $2^n$ atoms.
\end{theorem}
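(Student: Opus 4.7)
The plan is to invoke Proposition \ref{prop:atomrev}, which reduces the claim to showing that every $S \subseteq Q_n$ is reachable from $F$ in $\cD^\rev$. A direct calculation gives, for any unitary $(i \to j)$ and any $X \subseteq Q_n$,
\[
(i \to j)^{-1}(X) = \begin{cases} X \cup \{i\}, & \text{if } j \in X, \\ X \setminus \{i\}, & \text{if } j \notin X. \end{cases}
\]
So the reverse action of a unitary either adds $i$ to the current set (when $j$ is in it) or removes $i$ (when $j$ is not). Since $T$ is closed under composition, any finite sequence of such moves corresponds to applying a single word, so it suffices to show that $F$ can be transformed into an arbitrary $S$ by such a sequence. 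Note that $F$ is non-empty (because $L \ne \emp$) and a proper subset of $Q_n$ (otherwise $L = \Sig^*$, forcing $n = 1$).

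The main step is a connectivity claim: from any non-empty proper subset $X$ of $Q_n$, every non-empty proper subset $Y$ is reachable. I would split into subcases according to whether $X \cap Y$ is non-empty. If $X \cap Y \ne \emp$: first remove the elements of $X \setminus Y$ one by one (each source contains $X \cap Y$, hence is non-empty, and is contained in $X \ne Q_n$, so removal is legal), reaching $X \cap Y$; then add the elements of $Y \setminus X$ one by one (each source is non-empty), reaching $Y$. If $X \cap Y = \emp$: pick $y \in Y$ and add it to $X$ to obtain $X \cup \{y\}$, which now intersects $Y$, and, provided $X \cup \{y\}$ is still a proper subset of $Q_n$, finish using the previous subcase.

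Granting the connectivity claim, the two extremal subsets are easy to reach: $Q_n$ is reached from $F$ by adding missing elements one at a time, and $\emp$ is reached by first using the claim to reach some singleton $\{k\}$ and then applying $(k \to j)^{-1}$ for any $j \ne k$. The only remaining issue is the corner of the disjoint subcase where $X \cup \{y\} = Q_n$, which forces $X = Q_n \setminus \{y\}$ and (by disjointness) $Y = \{y\}$.

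This corner case is the main obstacle and is the one place where the hypothesis $n \ge 3$ is genuinely used. Since $|X| = n - 1 \ge 2$, there is some $i \in X$ with $i \ne y$; I would remove $i$ from $X$ to reach $Q_n \setminus \{y,i\}$ (legal since $X \ne Q_n$), then add $y$ to reach $Q_n \setminus \{i\}$ (legal since the source is non-empty), and then apply the first subcase to trim $Q_n \setminus \{i\}$ down to $\{y\}$ (the two intersect in $\{y\}$, and $Q_n \setminus \{i\} \ne Q_n$, so removal is legal throughout). This completes the reduction, and by Proposition \ref{prop:atomrev}, $L$ has $2^n$ atoms.
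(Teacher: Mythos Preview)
Your proof is correct and follows essentially the same strategy as the paper: reduce via Proposition~\ref{prop:atomrev} to reachability of every subset in $\cD^\rev$, observe that unitary preimages add or delete a single element, and then argue connectivity of the subset lattice starting from the non-empty proper set $F$. The only cosmetic difference is in the connectivity step: the paper shrinks $X$ to a singleton $\{i\}\subseteq X$, expands to $\{i,j\}$ with $j\in S$ (using $n\ge 3$ here so that $\{i,j\}\ne Q_n$), shrinks to $\{j\}$, and then expands to $S$, which sidesteps your case split on whether $X\cap Y=\emp$ and the attendant corner case.
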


The second result establishes Theorem \ref{thm:maxatom} in all but a few cases; it gives necessary and sufficient conditions for individual atoms of $L$ to have maximal complexity, but only when the bases of the atoms are in a certain size range.
\begin{theorem}
\label{thm:atomcomp}
Let $L$ be a regular language over $\Sig$ with complexity $n \ge 3$, and let $T$ be the transition semigroup of the minimal DFA of $L$.
Let $A_S$ be an atom of $L$ and suppose that either $n \ge 4$ and $2 \le |S| \le n-2$, or $n = 3$ and $1 \le |S| \le 2$. 
Then $A_S$ has maximal complexity if and only if the subgroup of permutations in $T$ is $|S|$-set-transitive and $T$ contains a transformation of rank $n-1$.
\end{theorem}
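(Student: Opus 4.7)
The plan is to recast the complexity of $A_S$ as a counting problem over atomic intervals. Using $w^{-1}K_i = K_{\delta_w(i)}$, one computes, for $V = \delta_w(S)$ and $W = \delta_w(\ol S)$, that $w^{-1}A_S = \emp$ if $V \cap W \ne \emp$ and $w^{-1}A_S = \bigcup\{A_T : V \subseteq T \subseteq V \cup \ol{\im \delta_w}\}$ otherwise. Hence the non-empty quotients of $A_S$ are in bijection with the distinct realised atomic intervals $[[V,\, V \cup \ol{\im \delta_w}]]$. Since
\[
\Psi(n,k) - 1 = \sum_{v=1}^{k}\sum_{u=k}^{n-1}\binom{n}{u}\binom{u}{v}
\]
counts precisely the Boolean pairs $(V,U)$ with $V \subseteq U$, $|V|=v$, $|U|=u$ and $1 \le v \le k \le u \le n-1$, the atom $A_S$ attains the maximum $\Psi(n,k)$ if and only if (a) every such Boolean pair is realised as $(\delta_w(S),\, \delta_w(S) \cup \ol{\im \delta_w})$ for some $w$, (b) distinct Boolean pairs give distinct atomic intervals (which forces $L$ to have all $2^n$ atoms, so that the atomic poset equals $2^{Q_n}$), and (c) some $w$ satisfies $\delta_w(S) \cap \delta_w(\ol S) \ne \emp$, producing the empty quotient.

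For necessity, realising any pair $(V,V)$ with $|V|=k$ requires a permutation in $T$ sending $S$ to $V$; varying $V$ over all $k$-subsets of $Q_n$ yields $|S|$-set-transitivity of the permutation subgroup $G$ of $T$. Realising a pair $(V,U)$ with $|V|=k$ and $|U|=k+1$ (valid since $k \le n-2$) demands $|\ol{\im\delta_w}| = 1$, so $T$ contains a rank-$(n-1)$ transformation. For sufficiency, assume $G$ is $|S|$-set-transitive and $t_0 \in T$ has rank $n-1$. The Livingstone--Wagner theorem together with complementation lifts $|S|$-set-transitivity to $j$-set-transitivity for $j \in [1,|S|] \cup [n-|S|,n-1]$, covering in particular $j=1$ and $j=2$. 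One then shows $T$ contains every unitary transformation via suitable compositions $p_1 \circ t_0 \circ p_2$ with $p_1,p_2 \in G$, whence Theorem~\ref{thm:atomnum} gives $L$ all $2^n$ atoms and establishes (b). For (c), $2$-set-transitivity of $G$ produces a permutation $p \in G$ mapping any pair $\{i,j\}$ with $i \in S$, $j \in \ol S$ onto the merge pair of $t_0$, so $t_0 \circ p$ is a witness. For (a), each pair $(V,U)$ requires a transformation of rank $n-(u-v)$ sending $S$ onto $V$ and $\ol S$ onto $\ol U$; such a transformation is assembled by composing $u-v$ conjugates of $t_0$ (each contributing one merge, directed by intervening permutations from $G$) together with a final permutation that positions $V$ and $\ol U$ correctly.

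The main obstacle is the constructive step in (a): one must control the action of the composed transformation on $S$ and on $\ol S$ independently, placing $k-v$ of the $u-v$ merges inside $S$ (to shrink $S$ to $V$) and the remaining $u-k$ inside $\ol S$ (to shrink $\ol S$ to $\ol U$), using only a single rank-$(n-1)$ transformation modulo the available permutations in $G$. This is most delicate when $G$ is one of the exceptional set-transitive groups of Proposition~\ref{prop:settrans} (such as $\operatorname{AGL}(1,5)$ or $\operatorname{PSL}(2,8)$), where the repertoire of realisable compositions is more restricted than for $G = S_n$ or $A_n$ and a careful, possibly case-by-case verification may be required.
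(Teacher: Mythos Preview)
Your framing via atomic intervals and the necessity argument are essentially those of the paper (modulo the bookkeeping in Lemma~\ref{lem:sintervals}, which you compress). The gap is in the sufficiency direction, at precisely the two places you gloss over.

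First, the claim that every unitary transformation arises as $p_1 \circ t_0 \circ p_2$ with $p_1,p_2 \in G$ is not correct: sandwiching a rank-$(n-1)$ map between permutations preserves rank but does not in general yield a unitary, because the permutation needed to invert the bijective part of $t_0 \circ p_2$ has no reason to lie in $G$. The paper's Lemma~\ref{lem:gens} extracts a unitary by a different mechanism: it composes $t_0$ with a single permutation $p$ (chosen using only $1$-set-transitivity) and then raises $p \circ t_0$ to a power dictated by its cycle structure.

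Second, and more importantly, your step~(a) is left open, and the case analysis you anticipate for the exceptional groups of Proposition~\ref{prop:settrans} is exactly the difficulty the paper avoids. The missing ingredient is Proposition~\ref{prop:mca} (a theorem of Ru\v{s}kuc): once $G$ is $2$-set-transitive and $T$ contains \emph{one} unitary transformation, $T$ already contains \emph{every} singular transformation of $Q_n$. With all unitaries $(i \to j)$ available, reaching an arbitrary $S$-interval $[[V,U]]$ from $[[S,S]]$ becomes a routine step-by-step procedure: permute $S$ to some $X$ with $V \subseteq X \subseteq U$ using $|S|$-set-transitivity, then shrink the lower endpoint one element at a time via $(i \to j)$ with $i \in X \setminus V$, $j \in V$, and symmetrically enlarge the upper endpoint. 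This argument is uniform in $G$; no separate treatment of $\operatorname{AGL}(1,5)$, $\operatorname{PGL}(2,5)$, $\operatorname{PSL}(2,8)$, or $\operatorname{P\Gamma L}(2,8)$ is required, and the delicate control over simultaneous merges inside $S$ and $\ol S$ that you flag as the main obstacle simply never arises.
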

The rest of the paper consists of the proofs of these three theorems.
Shortly before the deadline for this paper, we were informed that the proof of our main result can be simplified by replacing the \'atomaton with a different construction~\cite{Ivan14}. Below we present our original proofs, which use the \'atomaton.

\section{Proof of Theorem \ref{thm:atomnum}}
Let $L$ be a language of complexity $n \ge 3$ and  let $\cD = (Q_n,\Sig,\delta,q_1,F)$ be  its minimal DFA.
Let $T$ be the transition semigroup of $\cD$ and assume it contains all unitary transformations.
 By Proposition \ref{prop:atomrev}, $L$ has $2^n$ atoms if and only if for all $S \subseteq Q_n$, $S$ is reachable in $\cD^\rev$, i.e., there exists $w \in \Sig^*$ such that $\delta^\rev_w(F) = \delta^{-1}_w(F) = S$.

Suppose $X \subseteq Q_n$, with $1 \le |X| \le n-1$. Let $t = (i \rightarrow j)$ and $s = (i \rightarrow k)$ for $i \in Q_n$, $j \in X$, $k \not\in X$; then $t^{-1}(X) = X \cup \{i\}$ and $s^{-1}(X) = X \setminus \{i\}$. Since $T$ contains all unitary transformations, it contains $t$ and $s$. Thus for every non-empty $X \subset Q_n$ and every $i \in Q_n$, there are words $w,x \in \Sig^*$ such that $\delta_w^{-1}(X) = X \cup \{i\}$ and $\delta_x^{-1}(X) = X \setminus \{i\}$. 

In other words, from any non-empty proper subset $X$ of $Q_n$, we can reach (in $\cD^\rev$) all subsets that differ from $X$ by the addition or removal of a single element. Repeatedly applying this fact, we see that from $X$ we can reach any subset $S$ of $Q_n$: shrink $X$ to a singleton $\{i\} \subseteq X$, expand $\{i\}$ to $\{i,j\}$ for $j \in S$, shrink again to $\{j\} \subseteq S$, and then expand to $S$ (or shrink to $\emp$ for $S = \emp$).

Now, if $|F| = 0$ then $L = \emp$, and if $|F| = n$ then $L = \Sig^*$; since $\cD$ is minimal, $n = 1$ in either case.
Since $n \ge 3$, we have that $F$ is a non-empty proper subset of $Q_n$.
Thus by the argument above, we can reach all subsets of $Q_n$ in $\cD^\rev$; hence $L$ has $2^n$ atoms. \qed

\section{Proof of Theorem \ref{thm:atomcomp}}
\subsection{The \'Atomaton and Minimal DFAs of Atoms}
\label{sec:atomaton}
In this section we prove the $\Rightarrow$ direction of Theorem \ref{thm:atomcomp}. 
Two results on \'atomata and atoms are needed for this.
We first describe the transition function of the \'atomaton, in the case where the states are viewed as subsets of $Q_n$.
Define $\Delta_w \colon 2^{Q_n} \rightarrow 2^{Q_n}$ by $\Delta_w(S) = \ol{\delta_w(\ol{S})} = Q_n \setminus \delta_w(Q_n \setminus S)$.

\begin{lemma}
\label{lem:atrans}
Let $L$ be a regular language over $\Sig$. Let $\cD = (Q_n,\Sig,\delta,q_1,F)$ be the minimal DFA of $L$. Let $\cA$ be the \'atomaton of $L$ with transition function $\eta$. If $[[V,U]]$ is an atomic interval of $L$ and a set of states of $\cA$, then for all $w \in \Sig^*$, we have
$ \eta_w([[V,U]]) = [[\delta_w(V),\Delta_w(U)]]. $
\end{lemma}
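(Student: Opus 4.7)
The plan is to reduce to the single-letter case and then induct on $|w|$. For the base case $w = \eps$, $\eta_\eps$ is the identity and $\delta_\eps(V) = V$, $\Delta_\eps(U) = U$, so the statement is immediate. For the inductive step, once arbitrary single letters are handled I will combine the identities $\delta_{wa} = \delta_a \circ \delta_w$ and $\Delta_{wa} = \Delta_a \circ \Delta_w$ (the latter is a short unwinding of the definition of $\Delta$) to obtain
\[
\eta_{wa}([[V,U]]) = \eta_a(\eta_w([[V,U]])) = \eta_a([[\delta_w(V), \Delta_w(U)]]) = [[\delta_{wa}(V), \Delta_{wa}(U)]].
\]

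The heart of the argument is therefore the single-letter claim. My first step there is to describe $\eta_a(\{T\})$ for a single atomic subset $T \in \phi(\atoms)$. By Proposition~\ref{prop:atomiso}, $\cA$ is isomorphic to $\cD^{\rev\deter\rev}$, and in $\cD^{\rev\deter}$ the transition on $a$ sends $S$ to $\delta_a^{-1}(S)$; reversing this gives the characterization $R \in \eta_a(\{T\})$ iff $\delta_a^{-1}(R) = T$. Unpacking $\delta_a^{-1}(R) = T$ element by element yields the equivalent sandwich condition $\delta_a(T) \subseteq R \subseteq \Delta_a(T)$: every $i \in T$ must satisfy $\delta_a(i) \in R$ (giving $\delta_a(T) \subseteq R$), while no $i \in \ol{T}$ may have $\delta_a(i) \in R$ (giving $R \cap \delta_a(\ol{T}) = \emp$, i.e., $R \subseteq \Delta_a(T)$).

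Since $\eta_a$ distributes over unions of states, it follows that $\eta_a([[V,U]]) = \{R \in \phi(\atoms) : \delta_a^{-1}(R) \in [[V,U]]\}$. The inclusion $\eta_a([[V,U]]) \subseteq [[\delta_a(V), \Delta_a(U)]]$ then falls out immediately from the monotonicity of $\delta_a$ and $\Delta_a$ on the Boolean lattice $2^{Q_n}$. For the reverse inclusion, I take $R \in \phi(\atoms)$ with $\delta_a(V) \subseteq R \subseteq \Delta_a(U)$ and rewrite these inclusions as $V \subseteq \delta_a^{-1}(R) \subseteq U$; it then remains to verify $\delta_a^{-1}(R) \in \phi(\atoms)$. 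This is the main obstacle of the proof, and I will dispose of it using Proposition~\ref{prop:atomrev}: that proposition identifies $\phi(\atoms)$ with the subsets reachable in $\cD^\rev$, and $\delta_a^{-1}(R)$ is precisely the successor of $R$ under letter $a$ in the determinization $\cD^{\rev\deter}$, so reachability is inherited. Induction on $|w|$ then completes the proof.
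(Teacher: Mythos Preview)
Your proposal is correct and follows essentially the same approach as the paper: both arguments hinge on the single-state formula $\eta_a(T) = [[\delta_a(T),\Delta_a(T)]]$, use monotonicity of $\delta_a$ and $\Delta_a$ for the inclusion $\subseteq$, and invoke Proposition~\ref{prop:atomrev} on the preimage $\delta^{-1}(R)$ for the inclusion $\supseteq$. The only organizational difference is that you derive the single-state formula from Proposition~\ref{prop:atomiso} and then induct on $|w|$ via the identity $\Delta_{wa}=\Delta_a\circ\Delta_w$, whereas the paper cites the single-state formula from~\cite{BrTa13} and proves the interval statement directly for an arbitrary word.
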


\begin{proof}
It was shown in~\cite{BrTa13} that $\eta_a(S)=\{T\mid \text{$A_T$ is an atom of $L$, }T\supseteq \delta_a(S) \text{ and } \delta_a(\ol{S})\cap T=\emptyset\}$.
If $\delta_a(\ol{S}) \cap T = \emp$, then $T \subseteq \ol{\delta_a(\ol{S})} = \Delta_a(S)$. 
Thus $\eta_a(S)$ is the set of $T \subseteq Q_n$ such that $A_T$ is an atom of $L$ and $\delta_a(S) \subseteq T \subseteq \Delta_a(S)$, which is precisely $[[\delta_a(S),\Delta_a(S)]]$. One verifies that this can be extended to words, giving $\eta_w(S) = [[\delta_w(S),\Delta_w(S)]]$.

Next, we want to show $\eta_w([[V,U]]) = [[\delta_w(V),\Delta_w(U)]]$. 
For $T \in [[V,U]]$, consider $\eta_w(T)$. Since $V \subseteq T$,  $\delta_w(V) \subseteq \delta_w(T)$. Since $T \subseteq U$, we have $\ol{T} \supseteq \ol{U}$, and thus $\delta_w(\ol{T}) \supseteq \delta_w(\ol{U})$. It follows that $\Delta_w(T) \subseteq \Delta_w(U)$. Hence $\eta_w(T) = [[\delta_w(T),\Delta_w(T)]] \subseteq [[\delta_w(V),\Delta_w(U)]]$, and $\eta_w([[V,U]]) \subseteq [[\delta_w(V),\Delta_w(U)]]$. 

For containment in the other direction, suppose that $T$ is in $[[\delta_w(V),\Delta_w(U)]]$; then $\delta_w(V) \subseteq T \subseteq \Delta_w(U)$ and $A_T$ is an atom. 
Let $S = \delta_w^{-1}(T)$; then we claim $S \in [[V,U]]$.
Since $T \supseteq \delta_w(V)$, we have $\delta_w^{-1}(T) = S \supseteq V$.
If $i \in T \subseteq \Delta_w(U)$, then $i \not\in \delta_w(\ol{U})$.
Hence $\delta_w^{-1}(i)$ is disjoint from $\ol{U}$ for all $i \in T$, and so $\delta_w^{-1}(T)$ is disjoint from $\ol{U}$.
It follows that $\delta_w^{-1}(T) = S\subseteq U$.
It remains to show $A_S$ is an atom; but since $A_T$ is an atom, by Proposition \ref{prop:atomrev}, there exists $x \in \Sig^*$ such that $\delta_x^{-1}(F) = T$. Thus $S = \delta_w^{-1}(T) = \delta_w^{-1}(\delta_x^{-1}(F)) = \delta_{xw}^{-1}(F)$, so by Proposition \ref{prop:atomrev}, $A_S$ is also an atom.

Hence $S \in [[V,U]]$, and it follows that $\eta_w(S) = [[T,\Delta_w(S)]] \subseteq \eta_w([[V,U]])$. To complete the proof, we must show $\eta_w(S)$ is non-empty (and thus contains $T$) by showing that $T \subseteq \Delta_w(S) = \ol{\delta_w(\ol{\delta_w^{-1}(T))}}$. Observe that if $i \in T$, then $\delta_w^{-1}(i) \subseteq \delta_w^{-1}(T)$. Thus $\delta_w^{-1}(i) \cap \ol{\delta_w^{-1}(T)} = \emp$, and so $i \not\in \delta_w(\ol{\delta_w^{-1}(T)})$, which gives $i \in \Delta_w(S)$ as required.
Thus $T \in \eta_w(S) = [[T,\Delta_w(S)]]$,
and it follows that if $T \in [[\delta_w(V),\Delta_w(U)]]$, then $T \in \eta_w([[V,U]])$. 
This proves that the two intervals must be equal. 
\end{proof}

\renewcommand{\arraystretch}{1.2}
\begin{table}[bth]
\begin{minipage}[b]{0.2\linewidth}
\caption{$\cD$.}
\label{tab:D}
\begin{center}
$
\begin{array}{|c|c||c|}    
\hline
            &\ \delta\  &\ a \  \\
\hline  
\rightarrow &  1        &  2    \\
\hline  
\leftarrow  &  2        &  3    \\
\hline  
\leftarrow  &  3        &  4    \\
\hline  
            &  4        &  4    \\
\hline  
\end{array}
$
\end{center}
\end{minipage}
\hspace{0.2cm}
\begin{minipage}[b]{0.2\linewidth}
\caption{$\cD^R$.}
\label{tab:Dr}
\begin{center}
$
\begin{array}{|c|c||c|}    
\hline
            &\ \delta^\rev\ &\ a \      \\
\hline  
\leftarrow  &  1            &           \\
\hline  
\rightarrow &  2            &  \{1\}    \\
\hline  
\rightarrow &  3            &  \{2\}    \\
\hline  
            &  4            &  \{3,4\}  \\
\hline  
\end{array}
$
\end{center}
\end{minipage}
\hspace{0.2cm}
\begin{minipage}[b]{0.25\linewidth}
\caption{$\cD^{RD}$.}
\label{tab:Drd}
\begin{center}
$
\begin{array}{|c|c||c|}    
\hline
            & \delta^{\rev\deter}   & a         \\
\hline  
\rightarrow & \{2,3\}               & \{1,2\}   \\
\hline  
\leftarrow  & \{1,2\}               & \{1\}     \\
\hline  
\leftarrow  & \{1\}                 & \emp      \\
\hline  
            & \emp                  & \emp      \\
\hline  
\end{array}
$
\end{center}
\end{minipage}
\hspace{0.2cm}
\begin{minipage}[b]{0.25\linewidth}
\caption{$\cA$.}
\label{tab:A}
\begin{center}
$
\begin{array}{|c|c||c|}    
\hline
            & \eta      & a                 \\
\hline  
\leftarrow  & \{2,3\}   &                   \\
\hline  
\rightarrow & \{1,2\}   & \{\{2,3\}\}       \\
\hline  
\rightarrow & \{1\}     & \{\{1,2\}\}       \\
\hline  
            & \emp      & \{\emp,\{1\}\}    \\
\hline  
\end{array}
$
\end{center}
\end{minipage}

\end{table}
 
\begin{example}
\label{ex:eta}
The minimal DFA $\cD$ of Table~\ref{tab:D} accepts the language $\{a,aa\}$. 
The NFA $\cD^R$ is in Table~\ref{tab:Dr} and the DFA $\cD^{RD}$,  in Table~\ref{tab:Drd}. The \'atomaton $\cA$ is  in Table~\ref{tab:A}.
In NFAs $\cD^R$ and $\cA$, a blank in an entry $(q,a)$ indicates that there is no transition from $q$ under $a$. However, when determinization is used in Table~\ref{tab:Drd},  the empty set of states of $\cD^R$ becomes a state of the resulting DFA $\cD^{RD}$.
A right arrow ($\rightarrow$) indicates an initial state and a left arrow ($\leftarrow$) indicates a final state.

Consider the atomic interval $[[\emp, \{1,2\}]]=\{\emp, \{1\}, \{1,2\} \}$; 
we have $\delta_a(\emp)=\emp$, and $\Delta_a(\{1,2\})=\ol{\delta_a(\ol{\{1,2\}})}
=\ol{\delta_a( \{3,4\})}=\ol{ \{4\}}=\{1,2,3\}$.
Thus to determine the result of $\eta_a([[\emp,\{1,2\}]])$, we take the interval
$[\emp,\{1,2,3\}]_{2^{Q_4}} =
\{ \emp, \{1\}, \{2\}, \{3\}, \{1,2\},\{1,3\}, \{2,3\}, \{1,2,3\} \}$ 
and we remove the sets that do not represent atoms. After this removal, we get
$\{ \emp, \{1\}, \{1,2\}, \{2,3\} \} $.
Hence $\eta_a( \{\emp, \{1\}, \{1,2\} \} )=\eta_a( [[\emp,\{1,2\}]])= [[\emp,\{1,2,3\}]]=\{ \emp, \{1\}, \{1,2\}, \{2,3\} \}$.
\qedb
\end{example}
\medskip

\begin{remark}
If we treat the set of states of $\cA$ as a subset of $2^{Q_n}$, then it is possible that the empty set is a state of $\cA$, as in Example~\ref{ex:eta}. Since we use the same symbol for $\eta$ and its extension to subsets of states, an ambiguity arises when $\eta$ is applied to the empty set. Specifically, $\eta_w(\emp)$ may mean ``$\eta_w$ applied to the state $\emp \in 2^{Q_n}$'', in which case $\eta_w(\emp) = \eta_w([[\emp,\emp]]) = [[\emp,\coim\delta_w]]$, or it may mean ``$\eta_w$ applied to the empty subset of states $\emp \subseteq 2^{Q_n}$'', in which case $\eta_w(\emp) = \emp$.
We avoid this ambiguity  by adopting the  convention that  $\eta_w(\emp)$ always means ``$\eta_w$ applied to $\emp \subseteq 2^{Q_n}$'' and  $\eta_w([[\emp,\emp]])$ has the other meaning.
\end{remark}

A corollary of this is that every reachable subset of states in the \'atomaton $\cA = (\atoms,\Sig,\eta,I,F)$ is an atomic interval of $L$. The same holds for every reachable subset of states in the NFA $\cA_S = (\atoms,\Sig,\eta,\{S\},F)$ recognizing the atom $A_S$. 
Since the determinization $\cA^\deter_S$ is the minimal DFA of $A_S$~\cite{BrTa13}, it follows that \emph{the states of minimal DFAs of atoms of $L$ may be represented as atomic intervals of $L$}.

If $A_S$ is an atom of $L$ with maximal complexity, certain restrictions apply to the \emph{types} of the atomic intervals in $\cA^\deter_S$. For $S \subseteq Q_n$, define an \emph{$S$-type} to be a pair of integers $(v,u)$ satisfying:
\be
\item
If $|S| = 0$, then $v = 0$ and $0 \le u \le n-1$.
\item
If $|S| = n$, then $1 \le v \le n$ and $u = n$.
\item
If $1 \le |S| \le n-1$, then $1 \le v \le |S|$ and $|S| \le u \le n-1$.
\ee
A non-empty interval that has an $S$-type is called an \emph{$S$-interval}.  The empty interval is a special case: it is an $S$-interval if and only if $1 \le |S| \le n-1$.   The significance of $S$-types and $S$-intervals is as follows:

\begin{lemma}
\label{lem:sintervals}
Let $L$ be a regular language with complexity $n$ and let $A_S$ be an atom of $L$. 
If $A_S$ has maximal complexity $\Psi(n,|S|)$, then the set of states of $\cA^\deter_S$ equals the set of atomic $S$-intervals of $L$.
\end{lemma}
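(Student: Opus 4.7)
The plan is to identify the states of $\cA^\deter_S$ with specific atomic intervals, count atomic $S$-intervals, and then use the maximality hypothesis together with a cardinality argument to force equality of these two sets.

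By Lemma~\ref{lem:atrans} and induction on $|w|$ (using $\delta_{w'}\circ\delta_w=\delta_{ww'}$ and the parallel identity $\Delta_{w'}(\Delta_w(S))=\Delta_{ww'}(S)$, the latter a short direct computation from $\Delta_w(X)=\ol{\delta_w(\ol{X})}$), the reachable subset of states of $\cA_S$ after reading $w$ equals $[[\delta_w(S),\Delta_w(S)]]$. Hence every state of $\cA^\deter_S$ is an atomic interval of $L$. I would then count atomic $S$-intervals: each atomic interval $I$ is uniquely determined by its canonical pair $(\hat{V},\hat{U}) := (\bigcap I,\bigcup I)$ via $I = [[\hat{V},\hat{U}]]$, and its type is $(|\hat{V}|,|\hat{U}|)$; the number of canonical pairs of type $(v,u)$ is $\binom{n}{u}\binom{u}{v}$. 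Summing over all $S$-types and including the empty interval when $1 \le |S| \le n-1$, the total number of atomic $S$-intervals is at most $\Psi(n,|S|)$.

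The heart of the argument is showing every reachable state is an atomic $S$-interval. For $|S| \in \{0,n\}$: since $S$ is an atom (by the assumption that $A_S$ is an atom), $S$ lies in every reachable non-empty interval (directly from the forms $[[\emptyset,\coim \delta_w]]$ or $[[\im \delta_w,Q_n]]$), so the canonical form pins $\hat{V}=\emptyset$ or $\hat{U}=Q_n$, and the bound on the other coordinate follows from $\im \delta_w$ and $\coim \delta_w$ being proper subsets of $Q_n$. For $1 \le |S| \le n-1$: the empty interval is an $S$-interval by definition; for a non-empty $I=[[\delta_w(S),\Delta_w(S)]]$ with canonical form $(\hat{V},\hat{U})$, the containments $\delta_w(S) \subseteq \hat{V} \subseteq \hat{U} \subseteq \Delta_w(S)$ together with $|\delta_w(S)| \ge 1$ and $|\Delta_w(S)| \le n-1$ yield $|\hat{V}| \ge 1$ and $|\hat{U}| \le n-1$. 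The remaining bounds $|\hat{V}| \le |S|$ and $|\hat{U}| \ge |S|$ require exhibiting atoms $T_0,T_1 \in I$ with $|T_0| \le |S| \le |T_1|$, since then $\hat{V} \subseteq T_0$ and $\hat{U} \supseteq T_1$ force the desired inequalities; the maximality hypothesis forces $\atp$ to be rich enough that such atoms exist in every reachable interval.

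Finally, $\cA^\deter_S$ has exactly $\Psi(n,|S|)$ states by hypothesis, each is an atomic $S$-interval (by the previous paragraph), and there are at most $\Psi(n,|S|)$ atomic $S$-intervals (by the counting), so the two sets coincide. The main obstacle is the type-bound step in the middle case: showing $|\hat{V}| \le |S|$ and $|\hat{U}| \ge |S|$ for each non-empty reachable interval subtly uses the maximality hypothesis to guarantee atoms of the appropriate sizes inside each reachable interval.
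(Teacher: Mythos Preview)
Your overall architecture matches the paper's: count atomic $S$-intervals by type to get the bound $\Psi(n,|S|)$, show every state of $\cA^\deter_S$ is an atomic $S$-interval, and conclude by cardinality. The boundary cases $|S|\in\{0,n\}$ and the easy inequalities $|\hat V|\ge 1$, $|\hat U|\le n-1$ are handled the same way.

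The gap is precisely where you flag the ``main obstacle'': you assert that maximality makes $\atp$ ``rich enough'' to contain atoms $T_0,T_1\in I$ with $|T_0|\le|S|\le|T_1|$, but you give no mechanism for producing them. This is not a detail that writes itself; it is the entire content of the lemma in the case $1\le|S|\le n-1$. A priori, a reachable interval $[[\delta_w(S),\Delta_w(S)]]$ could consist entirely of atoms of size strictly greater than $|S|$, and nothing you have said rules this out.

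The paper closes this gap with a bijection argument you do not have. Let $X_S=\{[\delta_w(S),\Delta_w(S)]_P: w\in\Sig^*\}$ be the set of reachable intervals taken in the \emph{full} poset $P=2^{Q_n}$, and let $Y_S$ be the state set of $\cA^\deter_S$. One checks directly that $(|\delta_w(S)|,|\Delta_w(S)|)$ is always an $S$-type, so $X_S$ is a set of $S$-intervals of $P$ and hence $|X_S|\le\Psi(n,|S|)$. The restriction map $\alpha\colon[X,Y]_P\mapsto[[X,Y]]$ is a surjection $X_S\to Y_S$, and since $|Y_S|=\Psi(n,|S|)$ by hypothesis, $\alpha$ is a bijection and $X_S$ is the set of \emph{all} $S$-intervals of $P$. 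Now if some state had canonical lower endpoint $\hat V$ with $|\hat V|>|S|$, pick $X$ with $|X|=|S|$ and $\delta_w(S)\subseteq X\subset\hat V$; then $A_X$ cannot be an atom, so $[[X,X]]=\emp$. But $[X,X]_P$ has $S$-type $(|S|,|S|)$ and the empty $P$-interval is also an $S$-interval, so both lie in $X_S$ and both map under $\alpha$ to $\emp$, contradicting injectivity. This is the missing idea: maximality is used not to exhibit atoms directly, but to force $\alpha$ to be a bijection, which then yields the contradiction.
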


\begin{proof}
A simple counting argument shows that the number of intervals of type $(v,u)$ in a subposet of $P = (2^{Q_n},\subseteq)$ is bounded from above by $\binom{n}{u}\binom{u}{v}$. Combining this fact with the definition of an $S$-type gives that $\Psi(n,|S|)$ is an upper bound on the number of $S$-intervals in a subposet of $P$.
Now, we know that $\cA^\deter_S$ has exactly $\Psi(n,|S|)$ states; if we show that these states are all atomic $S$-intervals of $L$, then this implies the state set of $\cA^\deter_S$ contains exactly $\Psi(n,|S|)$ distinct atomic $S$-intervals of $L$, and nothing else.
Since the atomic poset of $L$ is a subposet of $P$, there can be no more than $\Psi(n,|S|)$ atomic $S$-intervals of $L$, and this proves the result.
Thus we just need to show that every state of $\cA^\deter_S$ is an $S$-interval.

Let $[[V,U]]$ be a state of $\cA^\deter_S$ and suppose $[[\delta_w(S),\Delta_w(S)]] = [[V,U]]$. 
If $[[V,U]]$ is the empty interval, then it is automatically an $S$-interval for $1 \le |S| \le n-1$, by definition.
For $|S| = 0$ or $|S| = n$, the empty interval is not an $S$-interval, but this does not matter since it is not reachable in $\cA^\deter_S$. In the $|S| = 0$ case, a state of $\cA^\deter_S$ has the form $[[\delta_w(\emp),\Delta_w(\emp)]] = [[\emp,\coim\delta_w]]$, which always contains $\emp$ (since $A_S = A_\emp$ is an atom); thus every state of $\cA^\deter_S$ is a non-empty interval. For $|S| = n$, a similar argument works.

Next, suppose $[[V,U]]$ is non-empty. 
For this case, some setup is needed.
Define the set $X_S = \{[\delta_w(S),\Delta_w(S)]_P\mid w \in \Sig^*\}$ of intervals in $P$. One can verify that $(|\delta_w(S)|,|\Delta_w(S)|)$ is an $S$-type for all $S$ and $w$, and thus $X_S$ is a set of $S$-intervals of $P$.
This means $|X_S|$ is bounded from above by $\Psi(n,|S|)$.
Now, let $Y_S = \{[[\delta_w(S),\Delta_w(S)]] \mid w \in \Sig^*\}$; this is just the set of states of $\cA^\deter_S$, and thus it has size $\Psi(n,|S|)$. 
Define $\alpha\colon X_S \rightarrow Y_S$ by $\alpha([X,Y]_P) = [[X,Y]]$; this is clearly a surjection, and thus $|X_S| \ge |Y_S| = \Psi(n,|S|)$.
Since we also have $|X_S| \le \Psi(n,|S|)$, we get $|X_S| = |Y_S| = \Psi(n,|S|)$ and hence $\alpha$ is a bijection.

Now, assume without loss of generality that the type of $[[V,U]]$ is $(|V|,|U|)$.
Suppose for a contradiction that $|V| > |S|$.
Then since $[[V,U]] = [[\delta_w(S),\Delta_w(S)]]$, we have $\delta_w(S) \subset V$. We can find a set $X$ such that $|X| = |S|$ and $\delta_w(S) \subseteq X \subset V$.
Now, since $\delta_w(S) \subseteq X \subseteq \Delta_w(S)$, we have $X \in [[\delta_w(S),\Delta_w(S)]]$ if and only if $A_X$ is an atom.
But $X \not\in [[V,U]]$ since $X \subset V$, and thus $A_X$ is not an atom.
It follows that the interval $[[X,X]]$ is empty.
If $|S| = 0$ or $|S| = n$, then in fact $|S| = |X|$ and $[[S,S]]$ is clearly non-empty, a contradiction.
If $1 \le |S| \le n-1$, observe that $\alpha([X,X]_P) = [[X,X]] = \emp$.
But $\emp \in X_S$ since $\emp$ is an $S$-interval of $P$ for $1 \le |S| \le n-1$, so also $\alpha(\emp) = \emp$.
This is a contradiction, since $\alpha$ is a bijection.
Thus for $S$ of any size, we always have $|V| \le |S|$. 
A similar argument to the above shows that $|U| \ge |S|$.

Thus, if $|S| = 0$ we have $|V| = 0$ and $0 \le |U| \le |\Delta_w(S)| = n-1$.
If $|S| = n$ we have $1 \le |\delta_w(S)| \le |V| \le n$ and $|U| = n$.
If $1 \le |S| \le n-1$, then $1 \le |V| \le |S|$ and $|S| \le |U| \le n-1$.
Thus we have proved $(|V|,|U|)$ is an $S$-type. 
Hence every state $[[V,U]]$ of $\cA^\deter_S$ is an atomic $S$-interval of $L$, and the number of states equals the upper bound $\Psi(n,|S|)$ on the number of atomic $S$-intervals of $L$, proving the lemma. 
\end{proof}

Lemma \ref{lem:sintervals} has two particularly useful consequences. Let $A_S$ be an atom of maximal complexity, and suppose $V,U\subseteq Q_n$ are sets such that $(|V|,|U|)$ is an $S$-type. Then:
\be
\item
$[[V,U]]$ has type $(|V|,|U|)$. In particular, this means $[[V,U]]$ contains its endpoints $V$ and $U$.
\item
$[[V,U]]$ is a state of $\cA^\deter_S$.
\ee
(1) follows since $(|V|,|U|)$ is an $S$-type, and so if $[[V,U]]$ does not have type $(|V|,|U|)$, the number of atomic $S$-intervals of type $(|V|,|U|)$ is not maximal and hence $A_S$ is not maximally complex. (2) follows since if $[[V,U]]$ has the $S$-type $(|V|,|U|)$, it is an atomic $S$-interval and thus a state of $\cA^\deter_S$.

These facts are sufficient to prove one direction of Theorem \ref{thm:atomcomp}:

\begin{proof}[Theorem \ref{thm:atomcomp} ($\Rightarrow$ Direction)]
Let $L$ be a language of complexity $n \ge 3$, let $T$ be the transition semigroup of the minimal DFA of $L$, and let $A_S$ be an atom of $L$.
Suppose either $n = 3 $ and $1 \le |S| \le 2$, or $n \ge 4$ and $2 \le |S| \le n-2$.
We prove that if $A_S$ has maximal complexity, then the subgroup of permutations in $T$ is $|S|$-set-transitive and $T$ contains a transformation of rank $n-1$.

The minimal DFA of $A_S$ is $\cA^\deter_S$, and its initial state is $[[S,S]]$. 
For all $X \subseteq Q_n$ with $|X| = |S|$, $(|X|,|X|)$ is an $S$-type. Thus by Lemma \ref{lem:sintervals}, $[[X,X]]$ is a state of $\cA^\deter_S$ of type $(|X|,|X|)$.
Thus $\eta_w([[S,S]]) = [[\delta_w(S),\Delta_w(S)]] = [[X,X]]$ for some $w \in \Sig^*$.
Applying Lemma \ref{lem:sintervals} again gives $(|\delta_w(S)|,|\Delta_w(S)|) = (|X|,|X|)$.
Hence $|X| = |\delta_w(S)| = |S|$, and so $\delta_w \in T$ is a permutation. 
It follows for all $X \subseteq Q_n$ with $|X| = |S|$, there is a permutation that sends $S$ to $X$; thus the subgroup of permutations in $T$ is $|S|$-set-transitive.

Now, let $\delta_w \in T$ have rank $n-k$ and consider $[[\delta_w(S),\Delta_w(S)]]$.
By Lemma \ref{lem:sintervals} this interval has type $(|\delta_w(S)|,|\Delta_w(S)|)$, so it is a non-empty interval.
This implies $\delta_w(S)$ and $\delta_w(\ol{S})$ are disjoint.
It follows that
$|\im \delta_w| = |\delta_w(Q_n)| = |\delta_w(S)| + |\delta_w(\ol{S})|$.
Since the rank of $\delta_w$ is $n-k$, $|\delta_w(\ol{S})| = (n - k) - |\delta_w(S)|$. 
Thus $|\Delta_w(S)| = n - (n-k-|\delta_w(S)|) = |\delta_w(S)| + k$, which gives $|\Delta_w(S)| - |\delta_w(S)| = k$.

Consider $[[S,S\cup\{i\}]]$ for $i \not\in S$. Since $(|S|,|S|+1)$ is an $S$-type, by Lemma \ref{lem:sintervals} this interval is reachable in $\cA^\deter_S$.
Thus there is a $\delta_w \in T$ such that $(|\delta_w(S)|,|\Delta_w(S)|) = (|S|,|S|+1)$. By the argument above, this $\delta_w$ must have rank $n - (|\Delta_w(S)| - |\delta_w(S)|) = n-1$. Hence $T$ contains a transformation of rank $n-1$. 
\end{proof}

\subsection{Semigroups and Groups}
\label{sec:groups}
To prove the other direction of Theorem \ref{thm:atomcomp}, we use some results from semigroup and group theory. 
The first is a result of Livingstone and Wagner~\cite{LiWa65}:
\begin{proposition}
\label{prop:lw}
Let $G$ be a permutation group of degree $n \ge 4$. If $2 \le k \le \frac{n}{2}$, then the number of orbits when $G$ acts on $k$-subsets of $Q_n$ is at least the number of orbits when $G$ acts on $(k-1)$-subsets of $Q_n$.
\end{proposition}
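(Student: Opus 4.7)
The plan is to pass to the permutation modules associated to $j$-subsets. For each $j$, let $V_j$ denote the $\mathbb{Q}$-vector space with basis indexed by the $j$-subsets of $Q_n$; then $G$ acts on $V_j$ by permuting basis vectors, and the orbit-sum vectors $\sum_{B\in O} B$ (indexed by the $G$-orbits $O$) form a $\mathbb{Q}$-basis of the invariant subspace $V_j^G$. Hence the number of $G$-orbits on $j$-subsets equals $\dim V_j^G$, and the proposition reduces to showing $\dim V_{k-1}^G \le \dim V_k^G$ whenever $2 \le k \le n/2$.

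I would establish this by exhibiting an injective, $G$-equivariant linear map $\phi \colon V_{k-1} \to V_k$. Define
\[
\phi(B) \;=\; \sum_{\substack{A \supseteq B \\ |A| = k}} A
\]
on basis elements and extend linearly. Equivariance is immediate because $G$ preserves set containment, so $\phi$ carries $V_{k-1}^G$ into $V_k^G$. If $\phi$ is injective on all of $V_{k-1}$, its restriction to $V_{k-1}^G$ is automatically injective, forcing $\dim V_{k-1}^G \le \dim V_k^G$.

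The substantive step is thus to prove that $\phi$ is injective when $k \le n/2$. My approach is to introduce the companion down-shadow $\psi \colon V_k \to V_{k-1}$ given by $\psi(A) = \sum_{B \subseteq A,\ |B|=k-1} B$ and to compute $\psi \circ \phi \colon V_{k-1} \to V_{k-1}$. Splitting the double sum according to whether $B' = B$ or $|B \cap B'| = k-2$ gives
\[
\psi \circ \phi \;=\; (n-k+1)\,I \;+\; M,
\]
where $M$ is the adjacency matrix of the Johnson graph $J(n,k-1)$. Because $\psi \circ \phi$ is symmetric in the natural basis, establishing positive definiteness will suffice for invertibility, and invertibility of $\psi \circ \phi$ forces $\phi$ to be injective.

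The hardest step is therefore the spectral analysis of $\psi \circ \phi$. I would appeal to the classical formula $\mu_i = (k-1-i)(n-k+1-i) - i$, for $0 \le i \le k-1$, for the eigenvalues of the Johnson graph $J(n,k-1)$; after adding $n-k+1$ and a short simplification, the eigenvalues of $\psi \circ \phi$ become $(n-k+1-i)(k-i)$. Both factors are strictly positive for $i$ in the range $0 \le i \le k-1$ precisely when $k \le (n+1)/2$, and this is implied by the hypothesis $k \le n/2$. Hence $\psi \circ \phi$ is positive definite, so $\phi$ is injective and the proposition follows. If one prefers to avoid quoting the Johnson eigenvalues, an alternative is to exploit the $\mathfrak{sl}_2$-structure on $\bigoplus_j V_j$ provided by the commutation $\psi\phi - \phi\psi = (n-2j)I$ on $V_j$: decomposing into irreducible $\mathfrak{sl}_2$-summands, each a string symmetric about level $n/2$, makes injectivity of $\phi$ below the midlevel automatic.
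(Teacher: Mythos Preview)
Your argument is correct: the permutation-module setup, the equivariant inclusion map $\phi$, the computation $\psi\circ\phi=(n-k+1)I+M$, and the Johnson-graph eigenvalue check all go through as you describe, and the $\mathfrak{sl}_2$ alternative is equally valid. This is in fact the standard modern proof of the Livingstone--Wagner inequality.

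The paper, however, does not prove this proposition at all. It is stated as a quotation from the literature (Livingstone and Wagner, \emph{Math.\ Z.}\ 1965) and used as a black box to derive Lemma~\ref{lem:lw}. So there is nothing to compare at the level of proof strategy: you have supplied a complete argument where the paper simply cites the result. Your write-up is self-contained and would make the paper independent of the citation at this point, at the cost of importing the Johnson-scheme spectrum (or the $\mathfrak{sl}_2$ machinery), which is somewhat heavier than anything else the paper uses.
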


 Using this proposition, we can easily prove 

\begin{lemma}
\label{lem:lw}
Let $G$ be a $k$-set-transitive permutation group of degree $n \ge 4$ and suppose $2 \le k \le \frac{n}{2}$. Then:
\be
\item
\label{lem:lw:1}
$G$ is $(n-k)$-set-transitive.
\item
\label{lem:lw:2}
$G$ is $\ell$-set-transitive for each $\ell$ such that $0 \le \ell \le k$ or $n-k \le \ell \le n$.
\ee
\end{lemma}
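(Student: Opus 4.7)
The plan is to dispatch the two claims separately. Part (1) is a clean complementation argument, and part (2) will be a short downward induction driven by Proposition \ref{prop:lw}, followed by an appeal to (1) to cover the upper half of the range.

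For (1), I would show that the complementation map $S \mapsto \ol{S}$ is a $G$-equivariant bijection from the $k$-subsets of $Q_n$ to the $(n-k)$-subsets, since $g(\ol{S}) = \ol{g(S)}$ for every permutation $g \in G$. This matches $G$-orbits on $k$-subsets bijectively with $G$-orbits on $(n-k)$-subsets, so one-orbit transitivity transfers immediately from one side to the other.

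For (2), I would first establish $\ell$-set-transitivity in the range $0 \le \ell \le k$ by downward induction on $\ell$, starting from the hypothesis at $\ell = k$. The case $\ell = 0$ is trivial, since $\emp$ is the unique $0$-subset. For the inductive step, assume $G$ is $\ell$-set-transitive for some $\ell$ with $2 \le \ell \le k$; because $k \le n/2$, we have $2 \le \ell \le n/2$, so Proposition \ref{prop:lw} bounds the number of orbits on $(\ell-1)$-subsets above by the number on $\ell$-subsets, namely $1$. Hence $G$ is $(\ell-1)$-set-transitive, and the induction descends through $\ell = 1$. For the range $n-k \le \ell \le n$, I would write $\ell = n-k'$ with $0 \le k' \le k$; by what has just been shown $G$ is $k'$-set-transitive, and part (1) applied with $k$ replaced by $k'$ then yields $(n-k')$-set-transitivity.

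No real obstacle is anticipated: the whole argument is a one-trick application of Proposition \ref{prop:lw} together with complementation. The only points needing small care are verifying that the induction hypothesis $\ell \le n/2$ is preserved at every step (which is automatic because the starting value $k$ already satisfies it), and noting that the degenerate endpoints $\ell = 0$ and $\ell = n$ must be dealt with by the trivial uniqueness of the empty and full subsets, since Proposition \ref{prop:lw} explicitly excludes these boundary cases.
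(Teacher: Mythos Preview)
Your proposal is correct and follows essentially the same route as the paper: complementation for (\ref{lem:lw:1}), then a downward induction using Proposition~\ref{prop:lw} for the range $0\le\ell\le k$, and complementation again to cover $n-k\le\ell\le n$. Your write-up is, if anything, slightly more careful than the paper's in flagging the trivial endpoints and the fact that the complementation argument you proved for (\ref{lem:lw:1}) is what is really being invoked for arbitrary $k'$ in the upper range, not the lemma hypothesis $2\le k\le n/2$.
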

\begin{proof}
(\ref{lem:lw:1}): Suppose $G$ is $k$-set-transitive. If $U$ and $V$ are $(n-k)$-subsets of $Q_n$, then $\ol{U}$ and $\ol{V}$ are $k$-subsets, and  there exists a permutation $p \in G$ mapping $\ol{U}$ to $\ol{V}$. But if $p$ maps $\ol{U}$ to $\ol{V}$, then it maps $U$ to $V$; thus $G$ can map any $(n-k)$-subset to any other $(n-k)$-subset, and so is $(n-k)$-set-transitive.

(\ref{lem:lw:2}): Suppose $G$ is $k$-set-transitive and $2 \le k \le \frac{n}{2}$. Then there is one orbit when $G$ acts on $k$-subsets. By  Proposition \ref{prop:lw}, there is one orbit when $G$ acts on $(k-1)$-subsets. This implies $G$ is $(k-1)$-set-transitive. Repeating this argument we conclude that $G$ is $\ell$-set-transitive for $0 \le \ell \le k$. By (\ref{lem:lw:1}), $G$ is also $\ell$-set-transitive for $n-k \le \ell \le n$. 
\end{proof}

\noindent Note that for $n=3$,
a permutation group of degree 3 is set-transitive if and only if it is transitive.

The second result  we use is a theorem of Ru\v{s}kuc, published by McAlister~\cite{Mc98}:
\begin{proposition}
\label{prop:mca}
Let $G$ be a permutation group of degree $n \ge 3$ and let $t\colon Q_n \rightarrow Q_n$ be a unitary transformation. Let $T$ be the transformation semigroup generated by $G \cup \{t\}$. Then $T$ contains all singular transformations if and only if $G$ is 2-set-transitive.
\end{proposition}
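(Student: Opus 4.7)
For the forward direction, I would use a kernel-tracking argument. Factor any $s \in T$ as $s = u_1 \circ \cdots \circ u_m$ with $u_i \in G \cup \{t\}$, and suppose $s$ has rank $n-1$ with unique kernel pair $\{a, b\}$. Define trajectories $a_j = (u_{j+1} \circ \cdots \circ u_m)(a)$ and $b_j$ analogously. The largest $k$ with $a_k \ne b_k$ forces $u_k(a_k) = u_k(b_k)$; since no permutation merges distinct inputs, $u_k = t$ and $\{a_k, b_k\} = \{i_0, j_0\}$. The partial composition $\tau = u_{k+1} \circ \cdots \circ u_m$ therefore maps $\{a, b\}$ bijectively to $\{i_0, j_0\}$; it can have no further kernel pair, else $s$ would have a second one, so it is injective, and since any $t$-factor forces a rank drop, $\tau$ must be $t$-free and lies in $G$. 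Thus $\{a, b\}$ is in the $G$-orbit of $\{i_0, j_0\}$. Since by hypothesis $T$ contains every rank-$(n-1)$ transformation, every 2-subset of $Q_n$ arises as a kernel pair, and $G$ is 2-set-transitive.

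For the reverse direction, assume $G$ is 2-set-transitive; this also entails $G$ transitive on $Q_n$ for $n \ge 3$, so there exists $g \in G$ with $g(i_0) = j_0$. My plan is to show $T$ contains every unitary, then invoke Howie's theorem that the singular part of $T_n$ is generated by its rank-$(n-1)$ idempotents, which are exactly the unitaries. Conjugation delivers one batch at once: $\sigma \circ t \circ \sigma^{-1} = (\sigma(i_0) \to \sigma(j_0))$, so $T$ contains every unitary $(a \to b)$ with $(a, b)$ in the $G$-orbit of $(i_0, j_0)$. If $G$ is 2-transitive this covers all ordered pairs and we are done; otherwise $G$ has exactly two orbits on ordered pairs, conjugation supplies only the ``forward'' orbit, and the reversed unitaries $(b \to a)$ require a separate construction. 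I would form $s = t \circ g \in T$, verify it has rank $n-1$ with kernel $\{i_0, d\}$ (where $d = g^{-1}(i_0)$) and image $Q_n \setminus \{i_0\}$, observe that $s$ restricts on its image to a permutation $p$ of $Q_n \setminus \{i_0\}$, and conclude from finite semigroup theory that the idempotent power $s^k$, with $k$ the order of $p$, equals the unitary $(i_0 \to d)$. Applying $g$ shows $(i_0, d)$ and $(j_0, i_0)$ lie in the same $G$-orbit, so $(i_0 \to d)$ is a reversed unitary; conjugating it by $G$ then yields the entire reversed orbit of unitaries.

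The main obstacle is the design and verification of this idempotent-power construction. One must compute $p(d) = s(d) = t(g(d)) = t(i_0) = j_0$ to conclude $p^{-1}(j_0) = d$, whence $s^k(i_0) = p^{k-1}(s(i_0)) = p^{-1}(j_0) = d$ while $s^k$ fixes every other element of $Q_n$. One must also verify that $d \ne j_0$ in the non-2-transitive case, so that the new unitary $(i_0 \to d)$ genuinely lies in the reversed orbit: $d = j_0$ would force $g$ to swap $i_0$ and $j_0$, which would collapse the two $G$-orbits on ordered pairs into one and so contradict the assumption that $G$ is not 2-transitive. With these verifications, conjugation fills out all unitaries and Howie's theorem closes the reverse direction.
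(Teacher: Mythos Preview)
The paper does not prove this proposition; it is quoted as a theorem of Ru\v{s}kuc from McAlister's paper and used as a black box. So there is no ``paper's own proof'' to compare against, and your proposal is supplying an argument where the paper supplies only a citation.

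Your argument is essentially correct in both directions, with one slip and one point that deserves a word of justification. In the forward direction you write ``the largest $k$ with $a_k \ne b_k$''; with your indexing ($a_m = a$, $a_0 = s(a)$, $a_{j-1} = u_j(a_j)$) the largest such $k$ is always $m$, which is not what you want. You mean the \emph{smallest} $k$ with $a_k \ne b_k$: then $a_{k-1} = b_{k-1}$, so $u_k$ merges $a_k$ and $b_k$, forcing $u_k = t$ and $\{a_k,b_k\} = \{i_0,j_0\}$. The rest of that direction (injectivity of $\tau$, hence $\tau$ is $t$-free and lies in $G$) is fine: any kernel pair of $\tau$ would be a kernel pair of $s$, and $s$ has only $\{a,b\}$, which $\tau$ does not collapse.

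In the reverse direction your construction is clean. Two small remarks: first, the implication ``$2$-set-transitive $\Rightarrow$ transitive on points for $n\ge 3$'' is true but not completely trivial; for $n=3$ it is complementation, and for $n\ge 4$ it follows from Livingstone--Wagner (Lemma~\ref{lem:lw}). Second, the identification ``rank-$(n-1)$ idempotents $=$ unitaries'' and the appeal to Howie's theorem (that the singular part of $T_n$ is generated by these idempotents) are both standard and correct, but in a self-contained write-up you should state them explicitly. With those adjustments, your proof stands on its own and is a reasonable substitute for the citation.
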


We can use this to prove the following lemma:

\begin{lemma}
\label{lem:gens}
Let $G$ be a 2-set-transitive permutation group of degree $n \ge 3$ and let $t \colon Q_n \rightarrow Q_n$ be a transformation of rank $n-1$. Then the transformation semigroup $T$ generated by $G \cup \{t\}$ contains all singular transformations.
\end{lemma}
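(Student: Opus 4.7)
The plan is to reduce to Proposition \ref{prop:mca} by constructing a single unitary transformation $u$ inside $T$. Once this is done, the subsemigroup $\langle G \cup \{u\} \rangle$ is contained in $T$ and, by Proposition \ref{prop:mca} applied to the 2-set-transitive group $G$ and the unitary $u$, already contains every singular transformation of $Q_n$. Thus the whole lemma reduces to exhibiting one unitary in $T$.

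To produce such a unitary, let the kernel of $t$ consist of a single pair $\{p,q\}$ together with $n-2$ singletons, let $r = t(p) = t(q)$, and let $j$ be the unique element of $Q_n \setminus \im t$. Note that $G$ is 1-set-transitive, since 2-set-transitivity implies 1-set-transitivity for $n \ge 3$: this follows from Lemma \ref{lem:lw}\,(\ref{lem:lw:2}) when $n \ge 4$, and for $n = 3$ it holds because singletons are complements of pairs. So we may pick $g \in G$ with $g(j) = p$ and set $h = g \circ t \in T$; then $h$ has the same kernel as $t$ and its image is $Q_n \setminus \{g(j)\} = Q_n \setminus \{p\}$, so the missing element of $h$ now lies inside the 2-element kernel class $\{p,q\}$.

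Because the doubly-covered target $g(r)$ of $h$ is hit by $q \in Q_n \setminus \{p\}$ as well as by $p$, the restriction $h|_{Q_n \setminus \{p\}}$ is a surjection onto $\im h = Q_n \setminus \{p\}$, hence a permutation $\sigma$ of $Q_n \setminus \{p\}$. Let $m$ be the order of $\sigma$. Then $h^m \in T$ acts as the identity on $Q_n \setminus \{p\}$ and sends $p$ to $\sigma^{m-1}(h(p)) \in Q_n \setminus \{p\}$, so $h^m$ is a unitary transformation; this completes the construction.

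The main obstacle is the reduction step: without first using $G$ to force the missing element into the 2-element kernel class, iterates of $t$ lose rank (one checks that $t^2$ drops to rank $n-2$ whenever $j \notin \{p,q\}$), and the cyclic-power trick fails. Once 1-set-transitivity places us in the favourable configuration, the remaining argument is routine and the final appeal to Proposition \ref{prop:mca} is immediate.
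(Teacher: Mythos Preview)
Your proof is correct and follows essentially the same strategy as the paper: reduce to Proposition~\ref{prop:mca} by manufacturing a unitary inside $T$, using 1-set-transitivity of $G$ to compose $t$ with a permutation so that the coimage element falls into the unique two-element kernel class, and then taking a suitable power. The paper phrases the last step in terms of ``$s$-paths'' and cycle lengths rather than your cleaner observation that $h|_{Q_n\setminus\{p\}}$ is a permutation whose order $m$ makes $h^m$ unitary, and you are slightly more careful than the paper in justifying 1-set-transitivity for the $n=3$ case, but the arguments are the same in substance.
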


\begin{proof}
By Proposition \ref{prop:mca}, if $G$ is 2-set-transitive and $t$ is a unitary transformation, then $T$ contains all singular transformations. Thus it suffices to show that if $t$ is any transformation of rank $n-1$, then $G \cup \{t\}$ generates a unitary transformation.

For each transformation $s \colon Q_n \rightarrow Q_n$, we define a set of tuples called \emph{$s$-paths}.
 For $k \ge 2$,  a tuple $(i_1,\dotsc,i_k)$ of distinct elements of $Q_n$ is an \emph{$s$-path of length $k$} if $s(i_j) = i_{j+1}$ for $1 \le j < k$ and $s(i_k) = i_\ell$ for some $\ell < k$. 
An $s$-path $(i_1,\dotsc,i_k)$ is \emph{incomplete} if there exists $a$ in $Q_n$ such that $(a,i_1,\dotsc,i_k)$  is an $s$-path, and \emph{complete} otherwise. An $s$-path $(i_1,\dotsc,i_k)$ is \emph{cyclic} if $s(i_k) = i_1$ and \emph{acyclic} otherwise. The element $i_1$ of the acyclic $s$-path $(i_1,\dotsc,i_k)$ is called the \emph{head}.

Let $t$ be a transformation of rank $n-1$, and consider the $t$-paths. If a $t$-path is complete and acyclic, its head must be an element of $\coim t$. Since $t$ has rank $n-1$, $|\coim t| = 1$, and so there is precisely one complete acyclic $t$-path. Let $(a_1,\dotsc,a_k)$ be that complete acyclic $t$-path, and suppose $t(a_k) = a_\ell$. 

Since $G$ is 2-set-transitive, it is  1-set-transitive  by Lemma \ref{lem:lw}. 
Thus there exists a permutation $p \in G$ with $p(a_1) = a_{\ell-1}$. 
Let $pt = p \circ t$, and consider $pt$-paths. 
Since $pt$ has rank $n-1$, there is only one complete acyclic $pt$-path; the head of this path must be $a_{\ell-1}$, since $\coim pt = \{a_{\ell-1}\}$.
Observe that
$pt(a_k) = p(t(a_k)) = p(a_{\ell}) = p(t(a_{\ell-1})) = pt(a_{\ell-1})$;
it follows that $(a_{\ell-1},p(a_\ell),pt(p(a_\ell)),\dotsc,a_k)$ is the complete acyclic $pt$-path.

Now, let $n$ be the product of the lengths of all the complete cyclic $pt$-paths and the incomplete cyclic $pt$-path $(p(a_\ell),pt(p(a,\ell)),\dotsc,a_k)$. Then we have $(pt)^n = (a_{\ell-1} \rightarrow (pt)^{n}(a_{\ell-1}))$, where $(pt)^n$ is $pt$ composed with itself $n$ times.
This proves that $T$ must contain all singular transformations, since it is 2-set-transitive and contains a unitary transformation. 
\end{proof}

These results are sufficient to prove the other direction of Theorem \ref{thm:atomcomp}:
\begin{proof}[Theorem \ref{thm:atomcomp} ($\Leftarrow$ Direction)]
Let $L$ be a language of complexity $n \ge 3$, let $T$ be the transition semigroup of the minimal DFA of $L$, and let $A_S$ be an atom of $L$.
Suppose either $n = 3 $ and $1 \le |S| \le 2$, or $n \ge 4$ and $2 \le |S| \le n-2$.
We prove that if the subgroup of permutations in $T$ is $|S|$-set-transitive and $T$ contains a transformation of rank $n-1$, then $A_S$ has maximal complexity.

By Lemmas \ref{lem:lw} and \ref{lem:gens}, $T$ contains all singular transformations. By Theorem \ref{thm:atomnum}, $L$ has $2^n$ atoms.
From the proof  of Lemma \ref{lem:sintervals}, $\Psi(n,|S|)$ is a tight bound on the number of $S$-intervals in the atomic poset of $L$.
Since $L$ has $2^n$ atoms (the maximal possible), the number of atomic $S$-intervals of $L$ meets the bound $\Psi(n,|S|)$.
It remains to show that all these intervals are reachable in the minimal DFA $\cA^\deter_S$ of $A_S$.
From the inital state $[[S,S]]$ of $\cA^\deter_S$, we can reach the empty interval by $(i \rightarrow j)$ where $i \in S$ and $j \not\in S$; thus it suffices to consider non-empty intervals.

Let $[[V,U]]$ be a non-empty  atomic $S$-interval of $L$ with type $(|V|,|U|)$.
By the definition of an atomic $S$-interval, $1 \le |V| \le |S|$ and $|S| \le |U| \le n-1$ and $V \subseteq U$.
Thus there exists a set $X$ such that $|X| = |S|$ and $V \subseteq X \subseteq U$. 
Since the subgroup of permutations in $T$ is $|S|$-set-transitive, there is a permutation $\delta_w \in T$ that sends $S$ to $X$; thus $\eta_w([[S,S]]) = [[X,X]]$.
If $V = X = U$, we are done, so assume that $V \subset X$ or $X \subset U$.
If $V \subset X$ and $|V| \ge 2$, we can shrink the lower bound of $[[X,X]]$ as follows: select distinct $i,j \in Q_n$ such that $i \in X \setminus V$ and $j \in V$.
Since $T$ contains all unitary transformations, there is a $\delta_x \in T$ such that $\delta_x = (i \rightarrow j)$. 
Since $i \not\in \ol{X}$, $\delta_x(\ol{X}) = \ol{X}$ and thus $\Delta_x(X) = X$.
It follows that $\eta_x([[X,X]]) = [[X \setminus \{i\},X]]$.
Repeating this process, we can reach $[[V,X]]$ for all $V$ with $1 \le |V| \le |S|$.
By a similar process, we can repeatedly enlarge the upper bound of $[[V,X]]$ to reach $[[V,U]]$.
Thus all $\Psi(n,|S|)$ atomic $S$-intervals of $L$ are reachable in $\cA^\deter_S$.
By Lemma \ref{lem:sintervals}, $A_S$ has maximal complexity. 
\end{proof}

\begin{remark}
\label{rem:unitary}
The proof above works for the $|S| = 1$ and $|S| = n-1$ cases if we 
assume that $T$ contains all unitary transformations, rather than only assuming it contains some transformation of rank $n-1$. 
\end{remark}

\section{Proof of Theorem \ref{thm:maxatom}}
\label{sec:proofs}
Having proved Theorems \ref{thm:atomnum} and \ref{thm:atomcomp}, we need  only a bit more work to prove our main theorem.

Let $L$ be a language with complexity $n \ge 3$ and let $T$ be the transition semigroup of the minimal DFA of $L$.
If $L$ is maximally atomic, then by Theorem \ref{thm:atomcomp} and Lemma \ref{lem:lw}, the subgroup of permutations in $T$ is $k$-set-transitive for $1 \le k \le n-1$, and hence is set-transitive; also, by Theorem \ref{thm:atomcomp}, $T$ contains a transformation of rank $n-1$.
This proves one direction of the theorem.

For the other direction, suppose the subgroup of permutations in $T$ is set-transitive and contains a transformation of rank $n-1$. 
By Theorem \ref{thm:atomnum}, $L$ has $2^n$ atoms.
By Theorem \ref{thm:atomcomp}, if $n \ge 4$ and $2 \le |S| \le n-2$ or $n = 3$ and $1 \le |S| \le 2$, then $A_S$ has maximal complexity.
By Lemma \ref{lem:gens}, $T$ contains all singular transformations  and hence all unitary transformations;  so  by Remark \ref{rem:unitary},  $A_S$ has maximal complexity if $|S| = 1$ or $|S| = n-1$.
The only remaining cases are $|S| = 0$ and $|S| = n$.

Let $\cA^\deter_S$ be the minimal DFA of $A_S$.
By Lemma \ref{lem:sintervals}, to show that $A_S$ has maximal complexity, it suffices to show that all atomic $S$-intervals of $L$ are reachable in $\cA^\deter_S$.
If $|S| = 0$, then $S = \emp$, and the atomic $\emp$-intervals of $L$ are those with type $(0,i)$ where $0 \le i \le n-1$.
The initial state of $\cA^\deter_\emp$ is $[[\emp,\emp]]$; thus a reachable state looks like $[[\delta_w(\emp),\Delta_w(\emp)]] = [[\emp,\coim\delta_w]]$ for some $w \in \Sig^*$.

Since $T$ contains all singular transformations, for all  $U\subset Q_n$, there exists $t \in T$ such that $\coim t = U$.
Hence for all $U \subset Q_n$, $[[\emp,U]]$ is reachable in $\cA^\deter_\emp$.
Thus all intervals of type $(0,i)$ are reachable, for $0 \le i \le n-1$.
By Lemma \ref{lem:sintervals}, $A_\emp$ has maximal complexity.
By a similar argument, when $|S| = n$, the atom $A_{Q_n}$ has maximal complexity.
Thus all $2^n$ atoms  have maximal complexity; this completes the proof. 
\qed

\section{Conclusions}
We have defined a new class of regular languages -- the maximally atomic languages -- and proven that a language of complexity $n$ is maximally atomic if and only if the transition semigroup of its minimal DFA is set-transitive and contains a transformation of rank $n-1$.
Since the set-transitive groups have been fully classified, it is easy to construct examples of maximally atomic languages and study them.
We have also derived a formula for the transition functions of \'atomata and minimal DFAs of atoms.
\smallskip 

\noindent{\bf Acknowledgements:} We thank a referee for giving many suggestions to improve our proofs.

\end{document}